\newcommand{\LTL}{\ensuremath{\text{LTL}}\xspace}
\newcommand{\LTLf}{\ensuremath{\text{LTL}_f}\xspace}
\newcommand{\PLTL}{\ensuremath{\text{PLTL}_f}\xspace}
\newcommand{\PLTLz}{\ensuremath{\text{PLTL}_f^0}\xspace}
\newcommand{\At}{\ensuremath{\mathsf{At}}\xspace}
\newcommand{\nats}{\ensuremath{\mathbb{N}}\xspace}
\newcommand{\acc}{\ensuremath{\mathsf{acc}}\xspace}
\newcommand{\branch}{\ensuremath{\mathsf{b}}\xspace}
\newcommand{\ini}{\ensuremath{\mathsf{in}}\xspace}
\newcommand{\ms}{\ensuremath{\mathsf{m}}\xspace}
\newcommand{\wt}{\ensuremath{\mathsf{wt}}\xspace}
\newcommand{\run}{\ensuremath{\mathsf{run}}\xspace}
\newcommand{\swt}{\ensuremath{\mathsf{w}}\xspace}
\newcommand{\csub}{\ensuremath{\mathsf{csub}}\xspace}
\newcommand{\END}{\ensuremath{\mathsf{END}}\xspace}
\newcommand{\atom}{\ensuremath{\mathbf{a}}\xspace}
\newcommand{\Amc}{\ensuremath{\mathcal{A}}\xspace}
\newcommand{\Bmc}{\ensuremath{\mathcal{B}}\xspace}
\newcommand{\Imc}{\ensuremath{\mathcal{I}}\xspace}
\newcommand{\Lmc}{\ensuremath{\mathcal{L}}\xspace}
\newcommand{\Omc}{\ensuremath{\mathcal{O}}\xspace}
\newcommand{\Pmc}{\ensuremath{\mathcal{P}}\xspace}
\newcommand{\Qmc}{\ensuremath{\mathcal{Q}}\xspace}
\newcommand{\Smc}{\ensuremath{\mathcal{S}}\xspace}
\newcommand{\Imf}{\ensuremath{\mathfrak{I}}\xspace}
\newcommand{\PS}{\text{\sc{PSpace}}\xspace}
\newcommand{\ET}{\text{\sc{ExpTime}}\xspace}
\newcommand{\prob}[1]{\ensuremath{\scaleobj{1.2}{\circledcirc}_{#1}}\xspace}
\newcommand{\argmax}{\text{argmax}\xspace}
\newcommand{\declare}{Declare\xspace}
\newcommand{\constraint}[1]{\texttt{#1}}
\newtheorem{theorem}{Theorem}
\newtheorem{proposition}[theorem]{Proposition}
\newtheorem{corollary}[theorem]{Corollary}
\theoremstyle{definition}
\newtheorem{example}[theorem]{Example}
\newtheorem{definition}[theorem]{Definition}
\title{Temporal Logics Over Finite Traces with Uncertainty (Full Version)}
\author{
Fabrizio M.\ Maggi \\ University of Tartu \\ f.m.maggi@ut.ee 
\And
Marco Montali \\ Free University of Bozen-Bolzano \\ montali@inf.unibz.it
\And
Rafael Pe\~naloza \\ University of Milano-Bicocca \\ rafael.penaloza@unimib.it
}
\begin{document}
\maketitle

\begin{abstract}
 Temporal logics over finite traces have recently seen wide application in a number of areas, from 
 business process modelling, monitoring, and mining to planning and decision making. However, real-life dynamic systems contain a 
 degree of uncertainty which cannot be handled with classical logics. We thus propose a new probabilistic temporal logic over finite 
 traces using superposition semantics, where all possible evolutions are possible,
 until observed. We study the properties of the logic and provide automata-based mechanisms for deriving
 probabilistic inferences from its formulas. We then study a fragment of the logic with better computational properties. Notably, formulas 
 in this fragment can be discovered from event log data using off-the-shelf existing declarative process discovery techniques.
\end{abstract}

\section{Introduction}

Linear temporal logic (\LTL) is one of the most important formalisms to declaratively specify and reason about the evolution of systems and processes~\cite{BaKa08}. Traditionally, \LTL adopts a linear, infinite model of time where formulas are interpreted over infinite traces. In recent years, increasing attention has been given to a different version of the logic, \emph{\LTL over finite traces} or \LTLf \cite{GiVa13}, which adopts instead a finite-trace semantics. From the modelling point of view, \LTLf matches settings where each execution of the system is eventually expected to end (even though there is no bound on the number of steps required to reach the termination point). From the reasoning point of view, the automata-theoretic characterisation of \LTLf relies on classical finite-state automata \cite{GiVa13,GiMM14}, which are easier to manipulate and pave the way for the development of robust and efficient reasoning techniques. In fact, \LTLf and extensions thereof have been widely employed in a number of application domains relevant for AI: from declarative business process modelling \cite{PeSV07,Mon10}, monitoring \cite{MMWV11,DDGM14}, and mining \cite{MaCV12,DMMP17}, to planning \cite{GHL09,GiRu18} and decision making \cite{BrDP18}.

When considering real-world processes, uncertainty (which is inexpressible in classical logics) is unavoidable. For example, some pieces may be defective, external events may delay a service, and the loan may lead to an outcome depending on various implicit factors. Handling these scenarios requires a logical formalism capable of expressing uncertainty. Surprisingly, to the best of our knowledge no probabilistic extension of \LTLf has been considered so far. Although several probabilistic variants of infinite-time temporal logics exist \cite{Ognj06,Mora11,Konu10,KovP18,Paleo16},
the complex interaction of probabilities and time usually requires syntactic or semantic restrictions in the logic, and does not directly carry over the finite-trace setting. 

To overcome both challenges at once, we propose a new probabilistic extension of \LTLf, called \PLTL, that essentially predicates over the possible evolutions of a trace. The main novelty of \PLTL lies in its \emph{superposition} semantics, where
every evolution is possible (with different probabilities) until it is observed. This semantics accommodates a seamless interaction of probabilities and time that was not possible in previous formalisms, and elegantly fits over finite traces. \PLTL is a direct
generalisation of \LTLf: \PLTL formulas without probabilistic constructors are in fact \LTLf formulas.
\PLTL adequately describes probabilistic temporal or dynamic properties of process
executions; e.g., it can express that a shipped package will eventually reach its
destination with probability 0.95, or that a machine will fail in the next 100 timepoints with probability below
0.001. 

Our second main contribution is an investigation of the logical and computational properties of \PLTL, introducing automata-based algorithms for deciding satisfiability of \PLTL formulas and for computing the
most likely executions of a system described in this logic. These core reasoning services provide the basis for sophisticated, domain-specific tasks such as a probabilistic version of conformance checking \cite{CDSW18} and (prefix) monitoring \cite{MMWV11}. Unsurprisingly, due to the intertwined connection of temporal and probabilistic constructors, handling
\PLTL formulas becomes \ET-hard. This leads us to our third contribution: a study of a fragment of \PLTL, called \PLTLz, where the complexity of reasoning falls to \PS in the length of the formula, matching the classical \LTLf case. Notably, formulas in this fragment can be discovered from event log  data using off-the-shelf existing declarative process discovery techniques \cite{MaCV12}.

This manuscript extends the published work~\cite{MaMP-AAAI20} with full proofs and an extended example.

\section{Preliminaries}

We briefly introduce tree and weighted string automata, assuming basic formal language knowledge. 
For more details, see \cite{TATA2007,DrKV-09}.

\smallskip
\noindent 
\textbf{Tree Automata.}
A \emph{tree} is a set of words of natural numbers $T\subseteq \nats^*$, which is closed under prefixes
and preceding siblings; i.e., if $wi\in T$, then $w\in T$, and
$wj\in T$ for all $1\le j\le i$. A tree is \emph{finite} if its cardinality is finite. Each finite tree $T$ has
a maximum number $k\in\nats$ (its \emph{width}) s.t.\ $wk\in T$ for some $w\in\nats^*$. 
The empty word $\varepsilon$ is the \emph{root}, and a \emph{leaf} is a node $w\in T$ s.t.\
$w1\notin T$. A \emph{labelling} of $T$ on a set $\Sigma$ is a mapping $T\to \Sigma$. A tree
with a labelling is a \emph{labelled tree}. A \emph{branch} of the tree $T$ is a sequence $w_1,\ldots,w_n$ of nodes
of $T$ such that $w_1=\varepsilon$, $w_n$ is a leaf node, and for every $i,1\le i< n$, $w_{i+1}=w_im$ for $m\in\nats$.
If $T$ is labelled, we call branch also the sequence of labels of a branch.

A $k$-ary \emph{tree automaton} is a tuple $\Amc=(\Qmc,\Delta,I,F)$ where \Qmc is a finite set of 
\emph{states}, $I,F\subseteq \Qmc$ are the \emph{initial} and \emph{final} states, 
respectively, and $\Delta\subseteq \Qmc\times\bigcup_{i\le k}\Qmc^k$ is the \emph{transition relation}.
A \emph{run} of \Amc on the tree $T$ is a labelling $\rho:T\to \Qmc$ s.t.\ $\rho(\varepsilon)\in I$ and
for every $w\in T$, if $wn\in T$ but $w(n{+}1)\notin T$, then $(\rho(w),\rho(w1),\ldots,\rho(wn))\in\Delta$.
It is \emph{successful} if for every leaf node $w\in T$, $\rho(w)\in F$. The \emph{language
accepted} by \Amc is the set $\Lmc(\Amc)$ of finite trees for which there is a successful run of \Amc. The
\emph{emptiness problem} asks whether $\Lmc(\Amc)=\emptyset$.

The emptiness problem of $k$-ary tree automata is decidable in time 
$\Omc(|\Qmc|^{k+2})$~\cite{VaWo86} by computing \emph{good states};
i.e., those that appear in a successful run. 
States $q\in F$ without transitions are good.
The set of good states is iteratively extended, adding any state that has a transition leading to
only good states. This iteration reaches a fixpoint after checking the transitions of each
state at most $|\Qmc|$ times. As there are at most $|\Qmc|^{k+1}$ transitions, the set of good states 
is computable in time $\Omc(|\Qmc|^{k+2})$.
\Amc is not empty iff at least one initial state is good.
The \emph{reduced automaton} $\dddot\Amc$ of \Amc is obtained by
removing all bad (i.e., non-good) states from \Amc. \Amc and $\dddot\Amc$ accept the same language, and
$\Lmc(\dddot\Amc)\not=\emptyset$ iff $\dddot\Amc$ contains at least one initial state. 

An alternative emptiness test constructs a successful run top-down. It
guesses an initial state to label the root node, and iteratively guesses transitions for every node not labelled with a 
final state. Through a depth-first construction, the algorithm preserves in memory
only one branch at a time, together with the information of the chosen transitions. Since every branch can be 
restricted to depth $|\Qmc|$, the process requires $\Omc(|\Qmc|\cdot k)$ space \cite{BaHP08}. 

\smallskip
\noindent 
\textbf{Weighted Automata.}
Consider the \emph{probabilistic semiring} $A=([0,1],\max,\times)$ with
the usual $\max$ and product on $[0,1]$.
A \emph{weighted automaton} is a tuple $\Amc=(\Qmc,\ini,\wt,F)$ where
\Qmc is a finite set of \emph{states}, $F\subseteq\Qmc$ are the \emph{final states}, $\ini:\Qmc\to[0,1]$
is the \emph{initialization function} and $\wt:\Qmc\times\Qmc\to [0,1]$ is the \emph{weight function}. A 
\emph{run} of \Amc is a finite sequence $\rho=q_0,q_1,\ldots,q_n$ with $q_n\in F$; $\run(\Amc)$ is
the set of all runs of \Amc. The weight of $\rho=q_0.\ldots,q_n\in\run(\Amc)$ is
$\wt(\rho):= \prod_{i=0}^{n-1}\wt(q_i,q_{i+1})$.
The \emph{behaviour} of \Amc is $\|\Amc\|:=\max_{\rho\in\run(\Amc)}\ini(q_0)\cdot\wt(\rho)$.
To compute the behaviour of the weighted automaton \Amc, we adapt the emptiness test for unweighted
automata to consider the weights of the transitions computing a function 
$\swt:\Qmc\to[0,1]$, where $\swt(q)$ is the maximum weight of all runs starting in $q$.
Initialize $\swt_0(q)=1$ if $q\in F$ and $\swt_0(q)=0$ if $q\notin F$.
Iteratively compute $\swt_{i+1}(q):=\max_{q'\in\Qmc}\wt(q,q')\swt_i(q)$. After
polynomially many iterations, we reach a fixpoint where $\swt_i\equiv\swt_{i+1}$.
If $\swt:=\swt_i$, then $\|\Amc\|=\max_{q\in\Qmc}\ini(q)\swt(q)$.

\section{The Probabilistic Temporal Logic \PLTL}
\label{sec:pltl}

\PLTL extends the linear temporal logic on finite traces \LTLf
\cite{GiVa13}, with
a probabilistic constructor expressing uncertainty about the evolution of traces. 
The only syntactic difference between \LTLf and \PLTL is this new 
constructor. Formally, \PLTL formulas are built by the following syntactic rule
where $a$ is a propositional variable, $p\in[0,1]$, and ${\bowtie}\in\{{\le},{\ge},<,>\}$:
\begin{align*}
\varphi ::= {} & a \mid \neg \varphi \mid \varphi\land\varphi \mid \bigcirc \varphi \mid \varphi U \varphi 
			\mid \prob{\bowtie p}\varphi.
\end{align*}
Intuitively, 
$\prob{\bowtie p} \varphi$ means that, at the next point in time, $\varphi$ holds with probability $\bowtie p$.
To formalise this, we use tree-shaped interpretations providing a class of alternatives branching 
into the future in a new \emph{superposition semantics}.
A \emph{\PLTL interpretation} is a triple $I=(T,\cdot^I,P)$, where $T$ is a finite tree, $\cdot^I$ is a 
labelling of $T$ on the set of propositional valuations,%
\footnote{As usual, we describe a propositional valuation by the set of variables it makes true.}
and $P:T\setminus\{\varepsilon\}\to[0,1]$ is s.t.\ for all $w\in T$, $\sum_{wi\in T}P(wi)=1$.
Satisfiability of a formula in a tree node is defined inductively, extending the \LTLf semantics.
For an interpretation $I=(T,\cdot^I,P)$ and $w\in T$:
\begin{compactitem}[$\bullet$]
\item $I,w\models a$ iff $a\in w^I$
\item $I,w\models\neg\varphi$ iff $I,w\not\models\varphi$
\item $I,w\models\varphi\land\psi$ iff $I,w\models\varphi$ and $I,w\models\psi$
\item $I,w\models\bigcirc\varphi$ iff $w$ is not a leaf node and for all $i\in\nats$, if $wi\in T$ then 
	$I,wi\models\varphi$
\item $I,w\models\varphi U\psi$ iff either (i) $I,w\models\psi$ or (ii) $I,w\models\varphi$ and for all $i\in\nats$, if
	$wi\in T$ then $I,wi\models \varphi U\psi$
\item $I,w\models\prob{\bowtie p}\varphi$ iff $\sum_{wi\in T; I,wi\models\varphi}P(wi)\bowtie p$.
\end{compactitem}
$I$ is a \emph{model} of $\phi$ if $I,\varepsilon\models\phi$. $\phi$ is \emph{satisfiable} if it has a model.

\begin{example}
\label{exa:form}
Figure~\ref{fig:exa0} shows three models of the formula $\phi_0:=\prob{\le 0.5}a\land\prob{\ge 0.6}\bigcirc b$. 
\begin{figure}
\centering
\includegraphics[width=0.85\columnwidth]{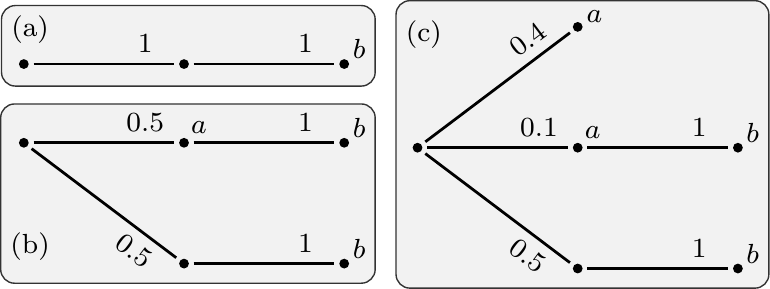}
\caption{Three models of the formula $\phi_0$ from Example~\ref{exa:form}.}
\label{fig:exa0}
\end{figure}
In (a), $a$ and $\bigcirc b$ are observed at the next timepoint with probability 0 and 1, respectively; (c) depicts the
other extreme, where $a$ and $\bigcirc b$ are observed with probability $0.5$ and $0.6$, respectively. 
Many intermediate models exist. 
The formula $\phi_1:=\prob{\ge 0.5}a\land\prob{\ge 0.6}\neg a$ is 
unsatisfiable: no model allows $a$ and $\neg a$ to hold with probability 0.5 and 0.6, 
respectively.
\end{example}
As usual, the main reasoning task for a \PLTL formula $\varphi$ is checking satisfiability. 
This problem becomes harder than in \LTLf by the need to verify the compatibility of the potential future
steps w.r.t.\ their probabilities. Thus, we build a tree automaton accepting a class of models
of $\phi$. For satisfiability, we can ignore the specific probabilities used, as long as they
are compatible. Thus an unweighted automaton suffices. Building this automaton
requires some definitions.

For a \PLTL formula $\phi$, $\csub(\phi)$ is the smallest set of \PLTL formulas containing all
subformulas of $\phi$, is closed under negation (modulo double negations), and s.t.\
$\psi_1 U\psi_2\in\csub(\phi)$ implies $\bigcirc\psi_1 U\psi_2\in\csub(\phi)$.
An \emph{atom} is a subset $\atom\subseteq\csub(\phi)$ s.t.\ (i) for every $\psi\in\csub(\phi)$, 
$\{\psi,\neg\psi\}\cap \atom\not=\emptyset$ and $\{\psi,\neg\psi\}\not\subseteq \atom$; (ii) for every formula
$\psi_1\land\psi_2\in \csub(\phi)$, $\psi_1\land\psi_2\in \atom$ iff $\{\psi_1,\psi_2\}\subseteq \atom$; and 
(iii)~for all $\psi_1 U\psi_2\in\csub(\phi)$, $\psi_1 U\psi_2\in \atom$ iff either $\psi_2\in \atom$ or 
$\bigcirc\psi_1 U\psi_2\in \atom$. Atoms are maximally consistent subsets of $\csub(\phi)$ that 
also verify the satisfiability of the until operator. The set of all atoms is denoted by $\At(\phi)$.
For brevity, we equate $\neg\prob{\bowtie p}\equiv\prob{\bowtie^- p}$, where $\bowtie^-$ is the inverse relation
of $\bowtie$ and assume that probabilistic formulas are never negated in \csub; e.g., $\neg\prob{\ge 0.5}a$
is replaced by $\prob{<0.5}a$.

Atoms define the states of the automaton. To define the transitions, we identify the 
combinations of probabilistic subformulas that can appear together under the uncertainty constraints;
e.g., if an atom contains $\prob{\le 0.3}\psi_1$ and $\prob{\le 0.4}\psi_2$, then transitions
must contain a successor with $\neg\psi_1$ and $\neg\psi_2$. Let 
$\Pmc(\atom)=\{\prob{\bowtie p}\psi\in \atom\}$ be the set of all probabilistic formulas in the atom \atom.
For every subset $S\subseteq 2^{\Pmc(\atom)}$ define the system of inequalities
\begin{align*}
\Imf(S) := & \{ \sum_{\mathclap{\prob{\bowtie p_i}\psi_i\in Q, Q\in S}} x_Q \bowtie p_i\mid \prob{\bowtie p_i}\psi_i\in \Pmc(a)\} \cup {} \\ &
	\{x_Q \ge 0 \mid Q\in S \} \cup \{ \sum_{Q\in S}x_Q = 1\}.
\end{align*}
$\Smc(\atom)$ is the set of all $S\subseteq 2^{\Pmc(\atom)}$ where $\Imf(S)$ has a solution.

\begin{example}
\label{exa:ineq}
One atom of the formula $\phi_0$ from Example~\ref{exa:form} is 
$\atom_0=\{\phi_0,\prob{\le 0.5}a,\prob{\ge 0.6}\bigcirc b,\neg\bigcirc b,\neg a,\neg b\}$; for which
$\Pmc(\atom_0)=\{\prob{\le 0.5}a,\prob{\ge 0.6}\bigcirc b\}$. For brevity, call the elements of $\Pmc(\atom_0)$
$1$ and $2$, respectively. The system of inequalities for $S_0=\{\{1\}, \{2\}, \{1,2\}\}$ 
\begin{align*}
x_{\{1\}} + x_{\{1,2\}} & {} \le 0.5 \\
x_{\{2\}} + x_{\{1,2\}}  & {} \ge 0.6 \\
x_Q & {} \ge 0 & Q\in S_0  \\
x_{\{1\}} + x_{\{2\}} + x_{\{1,2\}} & {} = 1,
\end{align*}
has a solution; e.g., $x_{\{1\}}=0.4$, $x_{\{2\}}=0.5$, $x_{\{1,2\}}=0.1$. As shown later, this means that
from the atom $\atom_0$, it is possible to branch in three scenarios to satisfy the probabilities (Figure~\ref{fig:exa0} (c)).
For $S_1=\{\emptyset,\{1\},\{1,2\}\}$, the system $\Imf(S_1)$
$$ x_{\{1\}} + x_{\{1,2\}} \le 0.5 , x_{\{1,2\}}\ge 0.6, x_\emptyset + x_{\{1\}} + x_{\{1,2\}} = 1$$
has no non-negative solution: $x_{\{1,2\}}$ needs to be $\le 0.5$ and $\ge 0.6$ simultaneously.
Hence $S_0\in\Smc(\atom_0)$ but $S_1\notin\Smc(\atom_0)$.
The latter means that to satisfy $\atom_0$, one must find a transition where the formula $\bigcirc b$ is satisfied, but $a$ is not.
\end{example}
If $\Pmc(\atom)=\emptyset$, then $2^{\Pmc(\atom)}=\{\emptyset\}$, and $\Imf(\{\emptyset\})=\{1{=}x_{\emptyset}\}$, 
which has a trivial solution; i.e.,
if an atom $\atom$ contains no probabilistic subformulas, $\Smc(\atom)$ contains only one element, and the construction reduces to
classical \LTLf.
Each element in $\Smc(\atom)$ defines a set of tuples of atoms yielding the 
transition relation of the automaton. Assume w.l.o.g.\ that the elements
of each $S\in\Smc(\atom)$ are ordered as $Q_1,\ldots,Q_{|S|}$. 
$T_S(\atom)$ is the set of $|S|$\mbox{-}tuples of atoms $(\atom_1,\ldots,\atom_{|S|})$ s.t.\ 
for all $\bigcirc\psi,\prob{\bowtie p}\psi\in\csub(\phi)$: (i) $\bigcirc\psi\in \atom$ iff $\psi\in \atom_i$ for all $i$, and
(ii) for every $i,1{\le} i{\le} |S|$, $\prob{\bowtie p}\psi\in Q_i$ iff $\psi\in \atom_i$.

\begin{example}
From Example~\ref{exa:ineq}, $S_0\in \Smc(\atom)$ defines 3-tuples of atoms 
s.t.\ the first two elements contain one of the probabilistic subformulas each, and the last contains both probabilistic
subformulas. Hence, the tuple $(\atom_1,\atom_2,\atom_3)$ formed by the following atoms belongs to $T_{S_0}(\atom)$:%
\footnote{Recall that we equate $\neg\prob{\le 0.5}\phi\equiv\prob{>0.5}\phi$.}
\begin{align*}
\atom_1 = {} & \{\neg\phi_0, \prob{> 0.5}a,\prob{< 0.6}\bigcirc b,\neg\bigcirc b, a,\neg b\} \\
\atom_2 = {} & \{\neg\phi_0, \prob{> 0.5}a,\prob{< 0.6}\bigcirc b,\bigcirc b,\neg a,\neg b\} \\
\atom_3 = {} & \{\neg\phi_0, \prob{> 0.5}a,\prob{< 0.6}\bigcirc b,\bigcirc b, a,\neg b\}
\end{align*}
\end{example}

We define an automaton which can decide satisfiability of \PLTL formulas.

\begin{definition}
The tree automaton 
$\Amc_\phi=(\Qmc,\Delta,I,F)$ is given by $\Qmc=\At(\phi)$, 
$\Delta=\{\{\atom\}\times\bigcup_{S\in\Smc(\atom)}T_S(\atom)\mid \atom\in \Qmc\}$,
$I=\{\atom\in \Qmc\mid \varphi\in \atom\}$, and
$F$ the set of all atoms not containing formulas of the form $\bigcirc \psi$, $\prob{> p}\psi$, or $\prob{\ge p'}\psi$, $p'> 0$.
\end{definition}
$\Amc_\phi$ naturally generalises the automata-based approach for
satisfiability of \LTLf formulas: if $\phi$ has no probabilistic constructor (i.e., it is an \LTLf formula),
$\Amc_\phi$ is the standard automaton for this setting~\cite{GiMM14}.
$\Amc_\phi$ accepts a class of well-structured quasi-models of the formula $\phi$, merging
redundant branches. The only missing element to have a model are the 
probabilistic values attached to each successor of a node. These are found solving the system of
inequalities built from each transition. 

\begin{restatable}{theorem}{satempty}
\label{thm:satempty}
The formula $\phi$ is satisfiable iff $\Lmc(\Amc_\phi)\not=\emptyset$.
\end{restatable}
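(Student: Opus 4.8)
The plan is to prove both directions of the biconditional by relating models of $\phi$ to successful runs of $\Amc_\phi$ on labelled trees.

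First I would establish the ``only if'' direction: given a model $I=(T,\cdot^I,P)$ of $\phi$, construct a successful run of $\Amc_\phi$. For each node $w\in T$, define its type $\atom(w):=\{\psi\in\csub(\phi)\mid I,w\models\psi\}$. The standard \LTLf argument shows each $\atom(w)$ is an atom: condition (i) holds because models decide each formula and its negation; (ii) follows from the semantics of $\land$; (iii) follows from the fixpoint characterisation of $U$, using that $\bigcirc\psi_1 U\psi_2\in\csub(\phi)$. The labelling $\rho(w):=\atom(w)$ is a candidate run. To see it respects $\Delta$, fix $w$ with children $w1,\dots,wn$. Group the children by which probabilistic subformulas of $\atom(w)$ they satisfy: for each child $wi$, let $Q_i=\{\prob{\bowtie p}\psi\in\Pmc(\atom(w))\mid I,wi\models\psi\}$, and let $S=\{Q_i\}$. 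Setting $x_Q=\sum_{wi:Q_i=Q}P(wi)$ gives a solution of $\Imf(S)$, since the probabilistic semantics of $\prob{\bowtie p}\varphi$ states exactly that $\sum_{wi\models\psi}P(wi)\bowtie p$, and the $P$-values of children sum to $1$; hence $S\in\Smc(\atom(w))$. The $\bigcirc$ semantics gives condition (i) of $T_S(\atom)$ and the grouping gives condition (ii), so the child tuple lies in $T_S(\atom(w))\subseteq\Delta$. Finally, a leaf $w$ satisfies no $\bigcirc\psi$ and no $\prob{>p}\psi$ or $\prob{\ge p'}\psi$ with $p'>0$ (the superposition semantics makes these false at leaves, as there are no children), so $\atom(w)\in F$. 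Since $\phi\in\atom(\varepsilon)$, the root is initial, and the run is successful, witnessing $\Lmc(\Amc_\phi)\neq\emptyset$.

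Conversely, for the ``if'' direction, suppose $\rho:T\to\At(\phi)$ is a successful run of $\Amc_\phi$ on a finite tree $T$. I would build a model $I=(T,\cdot^I,P)$ by setting $w^I:=\{a\mid a\in\rho(w)\}$ and recovering the probabilities $P$ from the solutions of the inequality systems. Concretely, at each non-leaf $w$ the children tuple belongs to some $T_S(\atom(w))$ for a witnessing $S\in\Smc(\atom(w))$; fix a solution $(x_Q)_{Q\in S}$ of $\Imf(S)$ and, for each child $wi$ lying in group $Q_i$, distribute the mass $x_{Q_i}$ among the children in that group (for instance uniformly), defining $P(wi)$ so that $\sum_{wi\in T}P(wi)=1$. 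The key claim is then that $I,w\models\psi$ iff $\psi\in\rho(w)$ for every $\psi\in\csub(\phi)$, proved by induction on the structure of $\psi$. The propositional, Boolean, and $\bigcirc$ cases are routine from the atom conditions and condition (i) of $T_S$; the $U$ case uses condition (iii) together with an argument that the until eventually discharges, which holds because $T$ is finite and final states contain no pending $\bigcirc\psi$ obligations. For the probabilistic case $\prob{\bowtie p}\psi$, condition (ii) of $T_S(\atom)$ ensures that a child $wi$ satisfies $\psi$ exactly when its group $Q_i$ contains $\prob{\bowtie p}\psi$, so $\sum_{wi\models\psi}P(wi)=\sum_{Q\ni\prob{\bowtie p}\psi}x_Q\bowtie p$ by the corresponding inequality in $\Imf(S)$. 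Since $\phi\in\rho(\varepsilon)\in I$, we get $I,\varepsilon\models\phi$.

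The main obstacle I anticipate is the probabilistic case of the inductive correctness claim, and more subtly the interplay between the \emph{set} $S$ of groups and the actual multiset of children. The definition of $T_S(\atom)$ is indexed by the distinct elements $Q_1,\dots,Q_{|S|}$ of $S$, so a priori it prescribes exactly one successor atom per group, whereas a model may have several children realising the same group $Q$; I must check that the reduction to one representative successor per group (merging redundant branches, as the text anticipates) is faithful, and conversely that splitting the mass $x_Q$ across possibly several identical-type children does not disturb any inequality. This requires verifying that the quantity $\sum_{wi\models\psi}P(wi)$ depends only on the groups and their total masses $x_Q$, not on how children are partitioned within a group --- which holds precisely because membership of $\psi$ is constant across a group. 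I would also take care that the leaf/final-state correspondence is exactly matched to the superposition semantics, where $\bigcirc\varphi$ and strict or positive probabilistic constraints are falsified at leaves.
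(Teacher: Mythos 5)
Your proposal takes essentially the same route as the paper's proof in both directions: run-to-model by extracting the probabilities $P$ from a solution of $\Imf(S)$ and showing $I,w\models\psi$ iff $\psi\in\rho(w)$ by structural induction, and model-to-run by typing each node with the set of formulas of $\csub(\phi)$ it satisfies and then merging children that realise the same probabilistic profile (the paper calls this \emph{pruning}).

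However, the obstacle you flag at the end is a genuine gap, and your resolution does not close it. You argue the merge is faithful because ``membership of $\psi$ is constant across a group,'' but this holds only for formulas $\psi$ occurring under a probabilistic operator, since a group is defined by the probabilistic profile alone. Two children in the same group may disagree on a formula $\psi$ with $\bigcirc\psi\in\csub(\phi)$, and condition (i) of $T_S(\atom)$ is a \emph{biconditional}: whenever $\bigcirc\psi\notin\atom(w)$, the tuple of successors must still contain an atom lacking $\psi$, and merging can discard the only such witness. Concretely, take $\phi=\bigcirc(a\lor b)\land\neg\bigcirc a\land\neg\bigcirc b\land\prob{\ge 1}(a\lor b)$ (writing $a\lor b$ for $\neg(\neg a\land\neg b)$), and the model whose root has two equiprobable leaf children $\{a\}$ and $\{b\}$. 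Both children lie in the same group, since both satisfy $a\lor b$; but no single atom can represent the merged child: by condition (ii) it must contain $a\lor b$, while condition (i) applied to $\bigcirc a\notin\atom(\varepsilon)$ and $\bigcirc b\notin\atom(\varepsilon)$ forces it to contain $\neg a$ and $\neg b$, whence $\neg a\land\neg b$ by atom closure, contradicting $a\lor b$. So after merging, the labelling is not a run of $\Amc_\phi$, and your induction breaks exactly at the $\bigcirc$ case. You should know that the paper's own proof has the same defect: its pruning step merges same-profile siblings and asserts it ``is easy to see'' that the result is still a model, which fails on the example above (after pruning, the root has a single child, so $\neg\bigcirc a$ or $\neg\bigcirc b$ becomes false). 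Repairing this seems to require changing the construction itself --- e.g.\ allowing transitions with several successor atoms per group, so that witnesses for negated $\bigcirc$-formulas survive --- not merely a more careful write-up of the merging step.
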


Automata emptiness is decidable in time $\Omc(|\Qmc|^{k+2})$, where
$k$ is the rank of the automaton.
In this case, the rank of $\Amc_\phi$ depends on the size of the formula $\phi$; specifically,
on the number of probabilistic subformulas that it contains: if $\csub(\phi)$ has $n$ probabilistic
subformulas, the rank of $\Amc_\phi$ is bounded by $2^n$\negmedspace; i.e., emptiness of $\Amc_\phi$ runs
in time $\Omc(|\Qmc|^{2^n})$. There is also a non-deterministic algorithm that uses space
$\Omc(|\Qmc|\cdot {2^n})$.
As the number of states is bounded exponentially on the length of $\phi$, Savitch's theorem~\cite{Savi-70}
yields the following result.
\begin{theorem}
\PLTL satisfiability is decidable in exponential space in the number of probabilistic formulas, but \emph{only}
exponential time in the size of the formula.
\end{theorem}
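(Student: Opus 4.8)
The plan is to derive the theorem directly from Theorem~\ref{thm:satempty} and the two emptiness procedures recalled in the preliminaries, the key being to track separately how the overall length $|\phi|$ and the number $n$ of probabilistic subformulas in $\csub(\phi)$ enter the size of $\Amc_\phi$. By Theorem~\ref{thm:satempty}, deciding satisfiability of $\phi$ amounts to testing whether $\Lmc(\Amc_\phi)\neq\emptyset$, so it suffices to bound the number of states and the rank of $\Amc_\phi$ and then invoke the complexity of tree-automata emptiness.

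First I would bound the state set. Every state is an atom, i.e.\ a subset of $\csub(\phi)$, and since $|\csub(\phi)|$ is linear in $|\phi|$, we get $|\Qmc|=|\At(\phi)|\le 2^{|\csub(\phi)|}=2^{O(|\phi|)}$; this is exactly the single-exponential blow-up already present in the \LTLf construction. Next I would bound the rank $k$: the arity of any transition in $T_S(\atom)$ equals $|S|$ for some $S\subseteq 2^{\Pmc(\atom)}$, hence is at most $2^{|\Pmc(\atom)|}\le 2^n$. Thus $k\le 2^n$, and the crucial observation is that the rank is governed \emph{only} by the number of probabilistic subformulas, not by the full length of $\phi$.

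Combining these with the two emptiness tests yields both clauses. For the time bound, the good-states computation decides emptiness in time $\Omc(|\Qmc|^{k+2})=\Omc(|\Qmc|^{2^n})=2^{O(|\phi|\cdot 2^n)}$, which for a fixed number of probabilistic formulas is single-exponential in $|\phi|$. For the space bound, the top-down nondeterministic test runs in space $\Omc(|\Qmc|\cdot k)=\Omc(2^{O(|\phi|)}\cdot 2^n)$; since $n\le|\phi|$, Savitch's theorem~\cite{Savi-70}, which squares the nondeterministic space, keeps the deterministic bound in $(2^{O(|\phi|)})^2=2^{O(|\phi|)}$, i.e.\ exponential, with the dependence on $n$ remaining exponential.

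The main obstacle is conceptual rather than computational: one must be careful that $n$ and $|\phi|$ play asymmetric roles, since the number of probabilistic formulas enters the \emph{exponent} of the time bound (through the rank $2^n$) yet only as a multiplicative factor in the space bound. This is precisely what lets the two parts of the statement---exponential space in $n$ and exponential time in $|\phi|$---hold simultaneously, and it is why I would explicitly verify that Savitch's squaring does not push the deterministic space out of the single-exponential regime in $|\phi|$.
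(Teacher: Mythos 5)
Your proof takes essentially the same route as the paper's: reduce satisfiability to emptiness of $\Amc_\phi$ via Theorem~\ref{thm:satempty}, bound the state set by $2^{O(|\phi|)}$ and the rank by $2^n$, then apply the bottom-up $\Omc(|\Qmc|^{k+2})$-time emptiness test for the time claim and the top-down $\Omc(|\Qmc|\cdot k)$-space nondeterministic test plus Savitch's theorem for the space claim. The asymmetric bookkeeping you highlight---the number of probabilistic subformulas entering only the rank (hence the exponent of the time bound) while the formula length governs the number of states---is precisely the paper's own argument, so the proposal is correct and matches it.
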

If the total number of probabilistic formulas is bounded by some constant, or if we parameterise the problem over
the number of probabilistic subformulas~\cite{DaFe12}, then satisfiability of \PLTL formulas is in \ET.
Conversely, satisfiability is \ET-hard on the length of the input formula $\phi$. The proof of this 
fact is based on a reduction from the \emph{intersection non-emptiness} problem for deterministic tree automata 
\cite{TATA2007,Seid-94}.

\begin{restatable}{theorem}{EThard}
\label{thm:EThard}
\PLTL satisfiability is \ET-hard.
\end{restatable}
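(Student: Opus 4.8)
The plan is to reduce from the intersection non-emptiness problem for deterministic (top-down) tree automata, which is \ET-complete~\cite{TATA2007,Seid-94}. Given automata $\mathcal{A}_1,\dots,\mathcal{A}_m$ over a common ranked alphabet $\Sigma$ of maximal arity $k$, I would build, in polynomial time, a \PLTL formula $\phi$ that is satisfiable iff $\bigcap_i \mathcal{L}(\mathcal{A}_i)\neq\emptyset$. The idea exploits the superposition semantics: a \PLTL model is itself a finite labelled tree, so I can let that tree be the candidate common tree, and use the node labels to record, at each node, its $\Sigma$-symbol together with the state of each $\mathcal{A}_i$ in its (downward) run. Since the logic without the probabilistic constructor is just \LTLf and hence only \PS, the reduction must genuinely use $\prob{\bowtie p}$ to force branching — this is where the extra power comes from.

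Concretely, I introduce propositional variables $p_a$ for each $a\in\Sigma$, variables $s_{i,q}$ for each automaton $i$ and state $q$ of $\mathcal{A}_i$, position markers $\mathsf{pos}_1,\dots,\mathsf{pos}_k$, and a root marker $\rho$. The target formula has the shape $\phi := \rho \land \bigwedge_i s_{i,q_0^i} \land G(\mathrm{Local})$, where $q_0^i$ is the initial state of $\mathcal{A}_i$ and $G$ is an ``always'' operator applying $\mathrm{Local}$ to the whole tree. A small but important point: the naive $\neg(\top\,U\,\neg\psi)$ collapses here because $U$ is universal over children and weak at leaves; instead I would use $G\psi := \psi\,U\,\bot$ (with $\bot$ an unsatisfiable literal), which unfolds to ``$\psi$ at the current node and, recursively, at all children'', and at a leaf reduces exactly to $\psi$. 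The conjunction $\mathrm{Local}$ asserts the routine well-formedness conditions: exactly one symbol, exactly one state per automaton, exactly one position on every non-root node, and the acceptance condition at nullary symbols.

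The heart of the reduction is encoding transitions through probabilistic branching. For a symbol $a$ of arity $r$, $\mathrm{Local}$ includes $p_a \rightarrow \bigl(\bigwedge_{j\le r}\prob{>0}\mathsf{pos}_j \land \bigwedge_{j>r}\prob{\le 0}\mathsf{pos}_j\bigr)$: since the $\mathsf{pos}_j$ are mutually exclusive, requiring each of positions $1,\dots,r$ to receive positive probability forces at least one distinct child per position, while $\prob{\le 0}$ forbids any child beyond arity $r$. Transition consistency is then expressed by $(s_{i,q}\land p_a) \rightarrow \prob{\le 0}(\mathsf{pos}_j \land \neg s_{i,\delta_i(q,a)_j})$ for each $i$, state $q$, symbol $a$ and position $j\le r$, which (together with ``one state per automaton'') pins the state each position-$j$ child must carry. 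Crucially, because \PLTL interpretations are \emph{finite} trees, every branch must terminate; combined with the rule that any non-nullary node is forced to have children, this means each leaf carries a nullary symbol, exactly where the acceptance constraint is checked.

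For correctness, from a model of $\phi$ I would extract a genuine common tree by choosing, at every node of arity $r$, one child for each position $1,\dots,r$ (each exists by the branching constraint) and recursing; determinism makes the per-automaton state labels the unique accepting runs, so the extracted tree lies in every $\mathcal{L}(\mathcal{A}_i)$. Conversely, a tree in the intersection, equipped with its unique runs and, say, a uniform $1/r$ probability on the children of each arity-$r$ node, directly yields a model of $\phi$. The reduction is clearly polynomial, and the probabilities used are trivial. The step I expect to be the main obstacle is making the branching control watertight: $\bigcirc$ and $U$ quantify \emph{universally} over children, so all per-position information must be routed through $\prob{\bowtie p}$, and I must argue that redundant or misaligned children (the model need not respect the intended sibling order) never obstruct the extraction of a well-formed ordered tree — this is precisely where determinism and the $\prob{\le 0}$ exclusion constraints do the work.
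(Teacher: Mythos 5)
Your reduction is correct and is essentially the paper's own proof: both reduce from intersection non-emptiness of deterministic tree automata, encode alphabet symbols, run states, and sibling positions as propositional variables, use the probabilistic constructor to force one child per position with the right successor state (you via $\prob{>0}\mathsf{pos}_j$ plus $\prob{\le 0}$ exclusions, the paper via $\prob{\ge 1/k_{q,\sigma}}(q^{q,\sigma}_i \land i)$), and exploit finiteness of models so that acceptance is checked at leaves, with the same two-directional correctness argument. The only point to watch is your $G\psi:=\psi\, U\, \bot$: it is sound under the weak-at-leaves reading that the paper's displayed semantics of $U$ literally gives, but under the strong-until reading implicit in the paper's atom and automaton construction it is unsatisfiable on every finite tree; the robust choice is $G\psi:=\psi\, U\, (\psi\land\neg\bigcirc\top)$, which expresses ``$\psi$ at every node'' under either reading (the paper's own use of $\Box$ as $\neg\Diamond\neg$ has the mirror-image defect).
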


\section{Probabilistic Entailment}
\label{sec:entailment}

We have focused in a decision problem considering only the existence of a model of the \PLTL formula
$\phi$, disregarding the probabilistic information included in $\phi$. We now
consider reasoning problems dealing with the likelihood of different traces.
A basic probabilistic reasoning problem on \PLTL is computing the \emph{most likely}
trace, along with its probability. To handle the multiplicity of models, we use an
optimistic approach, which selects the model maximising the likelihood of observing a given trace.
One could have chosen a \emph{pessimistic} approach minimising the likelihood instead. Such a case
can be handled analogously by changing all relevant maxima for minima in the following.

\begin{definition}
A \emph{trace} is a finite sequence of propositional valuations. The interpretation $I=(T,\cdot^I,P)$ 
\emph{contains} the trace $t$ if there is a branch $\branch$ of $T$ such that $\branch^I=t$. 
The \emph{probability} of $t$ in $I$ is $P_I(t)=\prod_{w\in \branch}P(w)$.
The \emph{probability} of $t$ w.r.t.\ the \PLTL formula $\phi$ is 
$P_\phi(t)=\max_{I\models\phi}P_I(t)$.
\end{definition}

\begin{example}
Using the formula $\phi_0$ from Example~\ref{exa:form}, let $I_0$ and $I_1$ be the models (b) and (c)
from Figure~\ref{fig:exa0}, respectively. $I_1$ contains the trace $(\emptyset,\{a\})$, but 
$I_0$ does not. Both models contain the trace $t=(\emptyset,\{a\},\{b\})$; $P_{I_0}(t)=0.5$; and $P_{I_1}(t)=0.1$.
$P_{\phi_0}(t)=0.5$ as witnessed by the model $I_0$.
\end{example}
We want to find the traces with a maximal probability. Formally, $t$ is a \emph{most likely trace} (mlt) w.r.t.\ 
$\phi$ iff for every trace $t'$, $P_\phi(t')\le P_\phi(t)$.
In our running example, a most likely trace is $(\emptyset,\emptyset,\{b\})$, which has probability 1 (Figure~\ref{fig:exa0} (a)); 
however, mlts are not necessarily unique; indeed, $(\emptyset,\emptyset,\{a,b\})$ and
$(\emptyset,\emptyset,\{b\},\emptyset)$ are also mlts w.r.t.\ $\phi_0$.
To find the mlts w.r.t.\ $\phi$, we transform the tree automaton $\Amc_\phi$ into
a weighted string automaton $\Bmc_\phi$ which keeps track of the most likely transitions available
from a given state of $\Amc_\phi$. For brevity, we do not distinguish
between the valuation forming a model, and the atom (containing the valuation) of the run of the automaton.
Using the probabilistic semiring, the behaviour of 
$\Bmc_\phi$ yields the probability of the mlts. We later show how to use this
information to extract the actual traces.

Recall that the reduced automaton $\dddot{\Amc_\phi}$ of the emptiness test excludes the bad states from $\Amc_\phi$
and accepts the same language as $\Amc_\phi$, but from every state in $\dddot{\Amc_\phi}$ one can build
a successful run. Deleting bad states also
removes all transitions (produced by the different combinations of the probabilistic subformulas that appear in
an atom) which cannot be used due to semantic incompatibilities. 
Let $(\atom,\atom_1,\ldots,\atom_n)\in\dddot\Delta$; i.e., a transition from 
$\dddot{\Amc_\phi}$. There exists an $S\in\Smc(\atom)$ such that $(\atom_1,\ldots,\atom_n)\in T_S$. For each 
$Q\in S$, we solve the optimisation problem
\begin{align*}
\mathsf{maximize}\ &\  x_Q & \text{subject to }\ & \ \Imf(S).
\end{align*}
Intuitively, we compute the largest probability that a branch satisfying the probabilistic constraints in $Q$ can
obtain, given the other branches defined by $S$. Call the result of this optimisation
problem $\ms_{S,\atom}(Q)$. As each optimisation problem is solved independently, the
maxima may not add 1; e.g., for $\atom_0,S_0$ from
Example~\ref{exa:ineq}, $\ms_{S_0,\atom_0}(\{2\})=1$, $\ms_{S_0,\atom_0}(\{1,2\})=0.5$ and
$\ms_{S_0,\atom_0}(\emptyset)=\ms_{S_0,\atom_0}(\{1\})=0.4$. This is intended; we try to
identify the largest probability that can be assigned to a path in a model; i.e., a trace.

For an atom \atom and a set $Q\subseteq\Pmc(\atom)$, let now 
\[
\ms_\atom(Q):=\max_{S\in\Smc(\atom),Q\in S}\ms_{S,\atom}(Q).
\]
We obtain the automaton $\Bmc_\phi$ by \emph{flattening} $\dddot{\Amc_\phi}$ into a string automaton, and
weighting every transition according to \ms. 

\begin{definition}
\label{def:waphi}
$\Bmc_\phi=(\dddot\Qmc,\ini,\wt,\dddot{F})$ is the weighted automaton 
where $\dddot\Qmc$
and $\dddot{F}$ are obtained from $\dddot{\Amc_\phi}$, 
$\ini(\atom)=1$ iff $\atom\in\dddot{I}$ (0 o.w.), and $\wt(\atom,\atom')=\ms_\atom(Q)$, where $Q\in \atom'$.
\end{definition}
Importantly, $\Bmc_\phi$ is constructed from $\dddot{\Amc_\phi}$ and not from $\Amc_\phi$. This 
ensures that branches defined by unsatisfiable constraints are ignored. 
For example, if $\{\prob{\le p} a,\prob{\le q}\neg a\}\subseteq \atom$, with $p+q<1$,
$\Smc(\atom)\not=\emptyset$, and $\ms_\atom(\{\prob{\le p} a\})=p$. However, if \atom is not a final state,
but a bad state: \atom has no transition. Constructing $\Bmc_\phi$ from $\Amc_\phi$,
\atom would have a transition to an atom $\atom'$ containing $a$ (with weight $p$), which is incorrect.%

\begin{restatable}{theorem}{probmlt}
\label{thm:probmlt}
Let $\Bmc_\phi$ be the weighted automaton constructed from $\phi$ by
Definition~\ref{def:waphi}. The probability of the mlt w.r.t.\ $\phi$ is $\|\Bmc_\phi\|$.
\end{restatable}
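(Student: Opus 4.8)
The plan is to establish the claimed equality $P_\phi(t_{\mathrm{mlt}}) = \|\Bmc_\phi\|$ by a two-directional argument relating runs of the weighted automaton $\Bmc_\phi$ to models of $\phi$ and the traces they contain. The behaviour $\|\Bmc_\phi\| = \max_{\rho\in\run(\Bmc_\phi)}\ini(q_0)\cdot\wt(\rho)$ is a maximum of products of transition weights $\ms_\atom(Q)$ over runs ending in final states; I want to show this number equals the maximal probability $P_\phi(t) = \max_{I\models\phi}P_I(t)$ over all traces $t$. First I would fix notation: a run $\rho = \atom_0,\ldots,\atom_n$ of $\Bmc_\phi$ is a sequence of atoms with $\atom_0\in\dddot I$, $\atom_n\in\dddot F$, and consecutive transitions present in $\dddot\Delta$ (after flattening). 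Since $\Bmc_\phi$ is built from the \emph{reduced} automaton $\dddot{\Amc_\phi}$, Theorem~\ref{thm:satempty} and the good-states construction guarantee every such run corresponds to an actual branch of some accepted tree, hence to a genuine trace appearing in a model of $\phi$.

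The core of the proof is two inequalities. For the direction $\|\Bmc_\phi\|\le P_\phi(t_{\mathrm{mlt}})$, I would take a run $\rho$ achieving the maximal behaviour and construct a model $I$ of $\phi$ containing the trace $t=\atom_0^I,\ldots,\atom_n^I$ with $P_I(t)=\wt(\rho)$. The delicate point is that each weight $\wt(\atom_i,\atom_{i+1})=\ms_{\atom_i}(Q)$ is obtained by solving the optimisation $\mathsf{maximize}\ x_Q$ subject to $\Imf(S)$ \emph{independently} at each step, so the optimal $x_Q$ values across different siblings need not sum to one. I must therefore argue that the witnessing solution for the chosen branch can be extended to a full legal probability assignment $P$ on a tree: the remaining probability mass $1-\ms_{\atom_i}(Q)$ is distributed among sibling successors dictated by $S$, which is possible precisely because $\ms_{S,\atom}(Q)$ is the value of a feasible linear program over $\Imf(S)$, so a complete feasible point realising that coordinate exists. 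Building the full tree then proceeds branch by branch, using that each $\atom_i$ is a good state and hence admits a successful subtree completing every sibling; the only branch whose probabilities we optimise is the one carrying $t$, while the others merely need to be satisfiable.

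For the converse $P_\phi(t_{\mathrm{mlt}})\le\|\Bmc_\phi\|$, I would start from any model $I\models\phi$ and any trace $t$ contained in $I$ via a branch $\branch=w_0,\ldots,w_n$, and exhibit a run $\rho$ of $\Bmc_\phi$ with $\wt(\rho)\ge P_I(t)=\prod_{w\in\branch}P(w)$. The standard \LTLf correspondence labels each $w_i$ with the atom $\atom_i=\{\psi\in\csub(\phi)\mid I,w_i\models\psi\}$; one checks this labelling is a successful run of $\Amc_\phi$, and since every state it visits lies on a successful run it survives in $\dddot{\Amc_\phi}$. At each node the local probability assignment $P(w_i i)$ over the actual successors yields a feasible solution of some $\Imf(S)$ with $S\in\Smc(\atom_i)$, and the mass $P(w_{i+1})$ routed to the successor satisfying the set $Q\in\atom_{i+1}$ is one feasible value of $x_Q$, hence bounded above by the optimum $\ms_{\atom_i}(Q)=\wt(\atom_i,\atom_{i+1})$. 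Multiplying over the branch gives $P_I(t)=\prod_i P(w_{i+1})\le\prod_i\wt(\atom_i,\atom_{i+1})=\wt(\rho)\le\|\Bmc_\phi\|$. Taking the maximum over $I$ and $t$ closes this direction.

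I expect the main obstacle to be the first (soundness) direction, specifically the tension created by solving each optimisation $\ms_{S,\atom}(Q)$ independently: the collection of per-transition maxima generally does not assemble into a single globally consistent probability tree, so I cannot simply glue locally optimal solutions. The argument must be careful to only optimise along the single realised branch while treating the coexisting siblings as constraints, invoking feasibility of $\Imf(S)$ to guarantee the optimal coordinate value $\ms_{S,\atom}(Q)$ is attained within a complete legal distribution. Verifying that this branch-wise construction terminates in a finite tree with every leaf a final state — using that bad states were removed so all visited atoms are good and thus complete to successful subtrees — is where the bookkeeping is heaviest, and it is the step I would write out in full detail.
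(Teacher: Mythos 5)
Your proposal is correct and follows essentially the same route as the paper's own proof: one direction turns a maximal run of $\Bmc_\phi$ into a model containing a trace of probability $\wt(\rho)$ (using that all states in $\dddot{\Amc_\phi}$ are good and can be completed to successful subtrees, as in Theorem~\ref{thm:satempty}), and the other direction maps any model and contained trace to a run whose per-step weights dominate the local probabilities, since each $P(wi)$ is a feasible value of the linear program whose optimum is $\ms_{\rho(w)}(Q)$. If anything, you spell out more explicitly than the paper does the key feasibility point — that each branch-wise optimum $\ms_{S,\atom}(Q)$ is attained inside a complete legal distribution over siblings, so the independently solved maxima along a single branch can be glued into one model.
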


The behaviour of $\Bmc_\phi$ is computable in polynomial time on the number of states, i.e., exponential on the length of the formula, but is not affected by the number of probabilistic subformulas
appearing in $\phi$. Yet, to build $\Bmc_\phi$, we need first to construct and manipulate 
$\Amc_\phi$, which may be doubly-exponential on the number of probabilistic subformulas. Indeed,
there is a trace with positive probability iff $\phi$ is satisfiable. Thus, deciding whether the probability
of the most likely trace is higher than some bound has the same complexity as satisfiability.

\begin{corollary}
The probability of the mlts is computable in exponential space in the number of probabilistic 
formulas, but exponential time in the size of the formula. Deciding if it is greater than
0 is \ET-hard.
\end{corollary}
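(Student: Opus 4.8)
The plan is to derive the two resource bounds by adding the cost of the weighted behaviour computation on top of the construction already analysed for satisfiability, and to derive the hardness result from a reduction of \PLTL satisfiability to the positivity test.

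For the upper bounds I would start from Theorem~\ref{thm:probmlt}: the mlt probability equals $\|\Bmc_\phi\|$, so it suffices to bound the cost of producing this value. The behaviour $\|\Bmc_\phi\|$ is obtained by the $\swt$-fixpoint iteration of the Preliminaries, which reaches a fixpoint after polynomially many rounds and is therefore polynomial in the number of states $|\dddot\Qmc|$. Since these states are (good) atoms, their number is at most $|\At(\phi)|\le 2^{|\csub(\phi)|}$, i.e.\ exponential in $|\phi|$ and unaffected by how the probabilistic subformulas combine. Hence the behaviour step, together with the linear programs that evaluate each weight $\ms_\atom(Q)$ (each an optimum of a program of size polynomial in the associated $\Imf(S)$), contributes only a polynomial-in-states overhead and does not worsen either parameter.

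The dominant cost therefore remains the construction and reduction of $\Amc_\phi$ into $\dddot{\Amc_\phi}$, which is exactly the analysis carried out for satisfiability. The rank of $\Amc_\phi$ is bounded by $2^n$, where $n$ is the number of probabilistic subformulas, so extracting the good states runs in time $\Omc(|\Qmc|^{2^n+2})$ and, through the non-deterministic top-down variant combined with Savitch's theorem, in space $\Omc(|\Qmc|\cdot 2^n)$. Collecting these observations gives exponential space in $n$ and, when $n$ is treated as a fixed parameter (so that $2^n$ is a constant factor), exponential time in $|\phi|$, matching the satisfiability bounds.

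For the \ET-hardness of the positivity test, I would prove the equivalence that some trace has positive probability w.r.t.\ $\phi$ iff $\phi$ is satisfiable. If $\phi$ is satisfiable, fix a model $I=(T,\cdot^I,P)$; since $\sum_{wi\in T}P(wi)=1$ at every node, each non-leaf has a child of strictly positive probability, and following such children from the root yields a branch \branch with $\prod_{w\in\branch}P(w)>0$, so its trace has positive probability and the mlt probability is positive. Conversely, $P_\phi(t)=\max_{I\models\phi}P_I(t)>0$ already presupposes a model of $\phi$. This gives an identity reduction from \PLTL satisfiability, which is \ET-hard by Theorem~\ref{thm:EThard}, to deciding whether the mlt probability exceeds $0$. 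The main obstacle is the bookkeeping of the first part, namely arguing that the polynomial-in-states behaviour computation and the weight programs genuinely fit within the exponential-space-in-$n$ budget, so that the rank-$2^n$ automaton construction — and not the weighted fixpoint — remains the bottleneck; the hardness direction is then immediate.
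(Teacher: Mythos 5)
Your proposal is correct and takes essentially the same route as the paper: the behaviour of $\Bmc_\phi$ is computable in time polynomial in the number of states, the dominant cost is the construction and reduction of $\Amc_\phi$ (so the bounds are inherited from the satisfiability analysis), and \ET-hardness follows from the equivalence that some trace has positive probability iff $\phi$ is satisfiable, combined with Theorem~\ref{thm:EThard}. Your explicit construction of a positive-probability branch in a model simply fills in a step that the paper asserts in a single sentence.
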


To find the mlts (and not just their probability), we adapt the computation process for 
the behaviour of $\Bmc_\phi$. At each iteration of the  computation, associate each
state with the maximum probability that can be derived from a trace starting from
it. Together with this number, we also store the successor states yielding that maximum probability,
getting an automaton that accepts all the most likely traces.

\begin{definition}
Given a \PLTL formula $\phi$, its weighted automaton $\Bmc_\phi=(\Qmc,\ini,\wt,F)$,
and a state $\atom\in\Qmc$, let $\swt(\atom)$ be obtained through the computation
of the behaviour of $\Bmc_\phi$. The
unweighted automaton $\overline{\Bmc_\phi}=(\Qmc,I,\Delta,F)$ is given by 
\begin{align*}
I = {} & \{\atom\in\Qmc\mid \ini(\atom)\cdot\swt(\atom)=\|\Bmc_\phi\|\}, \\
\Delta = {} &\{(\atom,\atom')\in\Qmc\times\Qmc\mid \wt(\atom,\atom')\cdot\swt(\atom')=\swt(\atom)\}. 
\end{align*}
\end{definition}

\begin{restatable}{theorem}{mlt}
$\overline{\Bmc_\phi}$ accepts the most likely traces.
\end{restatable}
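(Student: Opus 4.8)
The plan is to show two directions: every trace accepted by $\overline{\Bmc_\phi}$ is a most likely trace (mlt), and every mlt is accepted. The key tool is the already-computed function $\swt$, which by construction of the behaviour of $\Bmc_\phi$ satisfies the fixpoint equation $\swt(\atom)=\max_{\atom'}\wt(\atom,\atom')\swt(\atom')$ for non-final states, with $\swt(\atom)=1$ on final states. The intended meaning, which I would first state as an invariant, is that $\swt(\atom)$ equals the maximum weight over all runs of $\Bmc_\phi$ starting at $\atom$, and hence $\|\Bmc_\phi\|=\max_{\atom}\ini(\atom)\swt(\atom)$ is the probability of the mlt by Theorem~\ref{thm:probmlt}.

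\textbf{Soundness (accepted $\Rightarrow$ mlt).} Let $t$ be accepted by $\overline{\Bmc_\phi}$ via a run $\atom_0,\ldots,\atom_n$. First I would observe that the $\Delta$ of $\overline{\Bmc_\phi}$ only keeps an edge $(\atom,\atom')$ when $\wt(\atom,\atom')\swt(\atom')=\swt(\atom)$, i.e.\ when the edge is \emph{tight}: it realises the maximum in the fixpoint equation for $\swt(\atom)$. Likewise $I$ keeps only the initial states where $\ini(\atom)\swt(\atom)=\|\Bmc_\phi\|$. Multiplying the tightness equalities telescopically along the accepting run, $\ini(\atom_0)\prod_{i=0}^{n-1}\wt(\atom_i,\atom_{i+1})=\ini(\atom_0)\swt(\atom_0)=\|\Bmc_\phi\|$ (using $\swt(\atom_n)=1$ since $\atom_n\in F$). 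So the $\Bmc_\phi$-weight of this run equals $\|\Bmc_\phi\|$, which is the probability of the mlt; hence $P_\phi(t)=\|\Bmc_\phi\|$ and $t$ is a most likely trace.

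\textbf{Completeness (mlt $\Rightarrow$ accepted).} Conversely, let $t$ be a most likely trace, so $P_\phi(t)=\|\Bmc_\phi\|$. I would use Theorem~\ref{thm:probmlt} and the weighted-automaton machinery to fix a maximum-weight run $\atom_0,\ldots,\atom_n$ of $\Bmc_\phi$ that witnesses $t$ with weight $\|\Bmc_\phi\|$. The point is that such a witnessing run must be tight at every step: if some edge had $\wt(\atom_i,\atom_{i+1})\swt(\atom_{i+1})<\swt(\atom_i)$, then by the fixpoint characterisation of $\swt$ the run's total weight would be strictly below $\ini(\atom_0)\swt(\atom_0)\le\|\Bmc_\phi\|$, contradicting maximality. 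A symmetric argument forces $\ini(\atom_0)\swt(\atom_0)=\|\Bmc_\phi\|$, so $\atom_0\in I$ and every edge lies in $\Delta$; therefore the run is accepting in $\overline{\Bmc_\phi}$ and $t$ is accepted.

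\textbf{The main obstacle} I anticipate is the bookkeeping in the completeness direction tying a \PLTL trace back to a single weighted run, since $\Bmc_\phi$ is a flattened string automaton whose transition weights $\ms_\atom(Q)$ come from \emph{independent} per-branch optimisations of $\Imf(S)$. I must argue that the weights assigned to the successive single-branch choices along $t$ are jointly realisable in one genuine model of $\phi$ — that the optimum computed edge-by-edge is actually attainable by an interpretation, not merely an upper bound. This is exactly the content guaranteed by the construction of $\Bmc_\phi$ from the \emph{reduced} automaton $\dddot{\Amc_\phi}$ (so every traversed state is good and extends to a successful run) together with Theorem~\ref{thm:probmlt}; I would lean on these to convert the per-edge optima into a concrete model witnessing $P_\phi(t)=\|\Bmc_\phi\|$, rather than re-proving realisability from scratch.
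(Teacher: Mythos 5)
Your proposal is correct and follows essentially the same route as the paper: the transitions kept in $\overline{\Bmc_\phi}$ are exactly the \emph{tight} ones, so every accepting run has weight $\|\Bmc_\phi\|$ when read in $\Bmc_\phi$, and Theorem~\ref{thm:probmlt} converts such a run into a model witnessing $P_\phi(t)=\|\Bmc_\phi\|$, i.e.\ an mlt. The substantive difference is coverage: the paper's proof argues only this one inclusion (accepted $\Rightarrow$ mlt), whereas you also prove the converse, namely that every mlt is accepted, by fixing a weight-maximal run of $\Bmc_\phi$ witnessing the mlt and showing it must be tight at every edge and start at a maximising initial state (otherwise its weight would drop strictly below $\ini(\atom_0)\swt(\atom_0)\le\|\Bmc_\phi\|$). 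Since ``accepts the most likely traces'' is naturally read as an equality of languages, this second direction is genuinely needed, so your version is more complete than the paper's; your telescoping computation with $\swt(\atom_n)=1$ on final states also makes precise the paper's informal claim that transitions ``preserve the maximum possible probability''. Your decision to lean on Theorem~\ref{thm:probmlt} and the reduced automaton $\dddot{\Amc_\phi}$ for realisability of the per-edge optima, rather than re-proving it, is exactly what the paper does as well.
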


While it is important to understand the mlts, it is often more useful to compute the likelihood
of observing a specific trace or an element from a set of traces; e.g., to verify that an unwanted outcome is
unlikely. This problem can be reduced to that of computing the most likely traces, as long
as the set of desired traces is a recognisable language.

\begin{definition}
Let $L$ be a recognisable set of finite traces and $\phi$ a \PLTL formula. The \emph{probability} of $L$ 
w.r.t.\ $\phi$ is 
\[
P_\phi(L) = \max_{t\in L} P_\phi(t).
\]
A trace $t\in L$ is a \emph{most likely trace of $L$} w.r.t.\ $\phi$ if it holds that $P_\phi(t)=P_\phi(L)$.
\end{definition}

Since $L$ is recognisable, there exists an automaton $\Amc_L$ that accepts exactly the traces in $L$.
We can obviously see this automaton as a very simple weighted automaton, whose weights
are all in $\{0,1\}$. To find the mlts of $L$ and their corresponding probability, we 
intersect $\Amc_L$ with $\Bmc_\phi$ and $\overline{\Bmc_\phi}$, respectively, and compute 
$\|\Amc_L\cap\Bmc_\phi\|$ and the language accepted by $\Amc_L\cap \overline{\Bmc_\phi}$,
respectively.

Consider now the same probabilistic problems but in relation to an observed prefix. 
Given a sequence $s$ of propositional valuations, we want to analyse
only traces $t$ that extend $s$. 
Formally, if $\phi$ is a \PLTL formula, and $s$ is a finite sequence of propositional valuations, a \emph{most likely 
trace extending $s$} is a trace $t=s\cdot u$ such that for every trace $t'=s\cdot u'$, 
$P_\phi(t')\le P_\phi(t)$. We want to find all the most likely traces extending $s$, and
their probability. Note that the set of all finite words over the alphabet of propositional valuations 
which extend $s$ is recognisable; indeed, a simple concatenation of the universal automaton to the automaton
that accepts only $s$ recognises this language. Hence, our previous results answer this 
question.

\section{The \PLTLz Fragment of \PLTL}
\label{sec:formal}

We have seen that even the basic task of satisfiability is, in the \PLTL case,  \ET-hard on the length of the formula. To mitigate this complexity, we now focus on the fragment of \PLTL where probabilities can only appear as the top-most temporal
constructor of a conjunctive formula. We call this fragment \PLTLz.
Formally, a \PLTLz formula is a finite set of expressions of the form $\prob{\bowtie p}\varphi$, where
$\varphi$ is a classical \LTLf formula, ${\bowtie}\in\{\le,\ge,<,>\}$, and $p\in[0,1]$.%
\footnote{We consider all the standard abbreviations from \LTLf. In particular, $\Diamond\varphi\equiv\top U \varphi$, where
$\top$ stands for any tautology, and $\Box\varphi\equiv \neg\Diamond\neg\varphi$.}
In terms of processes, $\prob{\bowtie p}\varphi$ expresses that the proportion of traces
of the process that satisfy $\varphi$ is $p$. The set of formulas is interpreted as a conjunction of the probabilistic formulas
appearing in it; that is, a \PLTLz formula is a conjunction of probabilistic constraints.
Note that in this restricted setting, the probabilistic constructor \prob{} refers only to the probability of observing a specific
\LTLf formula, without a reference to the next point in time. We preserve the same notation, to keep consistent with the general
logic \PLTL.

This logic is interesting for two reasons. On the one hand, reasoning about \PLTLz falls down to \PS, matching the complexity of the classical \LTLf case (without probabilities). On the other hand, \PLTLz is suited for describing declarative constraints mined from historical log data of business process executions. More specifically, some \PLTLz patterns can be readily mined from log data using existing declarative process discovery techniques. 

\subsection{Reasoning in \PLTLz}
The superposition semantics of \PLTL collapse in \PLTLz to the more standard
multiple-world semantics. To simplify the presentation, we define a \emph{probabilistic interpretation} as a pair 
$\Pmc=(\Imc,P_\Imc)$, where \Imc is a finite set of \LTLf interpretations and $P_\Imc$ is a discrete probability distribution over \Imc. 
Satisfiability of an \LTLf formula by an \LTLf interpretation is defined as usual~\cite{GiVa13}.
The probabilistic interpretation $\Pmc=(\Imc,P_\Imc)$ is a \emph{model} of the \PLTLz formula
$\Phi=\{\prob{\bowtie p_i}\varphi_i\mid 1\le i\le n\}$ iff for every $i,1\le i\le n$ it holds that
\[
P_\Imc(\{I\in\Imc\mid I\models\varphi_i\})\bowtie p_i;
\]
that is, if the probability of all the models of $\varphi_i$ is $\bowtie p_i$.
For example, the formula $\phi_0$ from Example~\ref{exa:form} is equivalent to the \PLTLz formula 
$\Phi_0:=\{\prob{\le 0.5}a, \prob{\ge 0.6}\bigcirc b\}$. Two models of this formula are depicted in Figure~\ref{fig:zeromodel}.
\begin{figure}[tb]
\centering
\includegraphics[width=\columnwidth]{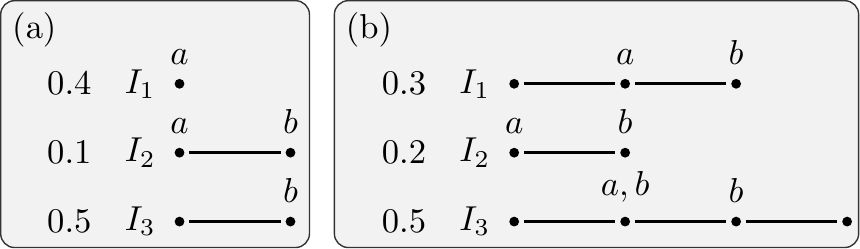}
\caption{A probabilistic model $\Pmc=(\{I_1,I_2,I_3\},P)$ of $\Phi_0$.
The probability of each interpretation appears on the left.}
\label{fig:zeromodel}
\end{figure}

Briefly, the uncertainty of \PLTLz formulas appears only at the beginning of the process, after which the execution follows a 
regular \LTLf execution. Thus, there is no need to branch within the superposition semantics at later times; a model becomes
a degenerate tree which only branches at the root. Moreover, as all formulas start with the constructor \prob{}, the root node serves
only as an anchor for the \PLTL semantics. Hence, a model can be represented as a sequence of classical \LTLf interpretations.
Compare the model in Figure~\ref{fig:exa0} (c) with Figure~\ref{fig:zeromodel} (a).
Interestingly, restricting to \PLTLz reduces the complexity of dealing with probabilistic formulas, and allows for simpler algorithms.
Consider first the case of deciding whether the \PLTLz formula $\Phi$ is satisfiable. 
In practice, this corresponds to verifying whether the class of all possible traces can be divided in such a way that the proportions
required by the probabilistic constraints are satisfied.
To solve this problem, we may proceed as follows.

Given $\Phi=\{\prob{\bowtie p_i}\varphi_i\mid 1\le i\le n\}$, analyse the $2^n$ possible scenarios of a trace, depending which of the
constraints are satisfied or violated. More precisely, consider the $2^n$ sets of constraints in the Cartesian product
$\prod_{i=1}^n\{\phi_i,\neg\phi_i\}$; i.e., each set chooses for every formula whether it will be satisfied or violated.  If each of these sets,
seen as the conjunction of the formulas that it contains, is satisfiable, then the input \PLTLz formula is satisfiable as well. On the other
hand, if any of these sets is unsatisfiable, it means that it is impossible to build a trace that satisfies that combination of formulas;
hence that scenario should be assigned probability 0.

To verify that probabilities for the remaining branches can still be assigned consistently with the values in $\Phi$, we build a system of
inequalities whose solution space is precisely the valid probability assignments. We consider one variable for each case. For readability,
we name these variables $x_0,\ldots, x_{2^n-1}$ using a binary subindex which indicates satisfaction or violation of constraints
assuming w.l.o.g.\ that the constraints are linearly ordered.
That is, the subindex is a chain of length $n$ of 0s and 1s; a 0 or a 1 at position $i$ means that $\neg\phi_i$ or $\phi_i$ is satisfied, 
respectively. We use the same subindex convention to refer to the sets of constraints $S_i$;
e.g., if $\Phi$ contains three formulas, then $x_{010}$ is the variable corresponding to the set 
$S_{010}=\{\neg\phi_1,\phi_2,\neg\phi_3\}$.
Using this information, $\Lmc_\Phi$ is the system of inequalities
\begin{align*}
x_i \ge 0 &&& 0\le i < 2^n \\
\sum_{i=0}^{2^n-1}x_i = 1 \\
\sum_{\text{$j$th position is 1}}x_i \bowtie p_j &&& 0\le j <n\\
x_i = 0 &&& \text{if $S_i$ is unsatisfiable}
\end{align*}
The first two lines guarantee that we assign a non-negative value to each variable, and that their sum is one; we can see these
assignments as probabilities. The third line verifies the probability associated to each constraint in $\Phi$: all the variables that
correspond to cases making $\phi_i$ true should add to be $\bowtie p_i$. The last line ensures that the unsatisfiable cases are never
assigned a positive probability. This system of inequalities has a solution iff the \PLTLz formula is satisfiable.

\begin{example}
\label{exa:run}
Let $\Phi_1:=\{\prob{\le 0.8}\Diamond a, \prob{\le 0.7}\Box(a\to\Diamond b)\}$. Since
$\Phi_1$ has two probabilistic constraints, we build four variables and sets
\begin{align*}
S_{00}:= {} & \{\neg \Diamond a, \neg\Box(a\to\Diamond b)\}, &
S_{01}:= {} & \{\neg \Diamond a, \Box(a\to\Diamond b)\}, \\
S_{10}:= {} & \{\Diamond a, \neg\Box(a\to\Diamond b)\}, &
S_{11}:= {} & \{\Diamond a, \Box(a\to\Diamond b)\}.
\end{align*}
$S_{00}$ is clearly unsatisfiable, but the remaining three sets are satisfiable. 
The system of inequalities must enforce that $x_{00}$ is 0. Specifically, the system $\Lmc_{\Phi_1}$
is 
\begin{align*}
x_{00} =  0 && x_{01} \ge  0 \quad x_{10} \ge 0 \quad x_{11} \ge 0 \\
x_{00} + x_{01} + x_{10} + x_{11} =  1 \\
x_{10} + x_{11} \le  0.8 &&
x_{01} + x_{11} \le  0.7
\end{align*}
A solution of $\Lmc_{\Phi_1}$ is $x_{00}=0$, $x_{01}=0.2$, $x_{10}=0.3$, and $x_{11}=0.5$, which
yields a probabilistic model of $\Phi_1$ consisting
of three interpretations, $I_1, I_2, I_3$; each interpretation $I_i$ satisfies the constraints in the
set $S_i$ and is assigned the probability $x_i$. An interpretation for $S_0$ is not needed because
these constraints are unsatisfiable (and the combination of formulas is assigned probability 0).
\end{example}

\begin{theorem}
The \PLTLz formula $\Phi$ is satisfiable iff $\Lmc_\Phi$ has a solution.
\end{theorem}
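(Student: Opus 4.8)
The plan is to prove both directions of the biconditional through a direct correspondence between the worlds of a probabilistic model and the variables of $\Lmc_\Phi$. The key device is that every \LTLf interpretation $I$ determines a \emph{type}, namely the binary string $\tau(I)\in\{0,1\}^n$ whose $j$th bit is $1$ exactly when $I\models\varphi_j$. By construction $I$ satisfies the set of (possibly negated) constraints $S_{\tau(I)}$, and conversely some interpretation has type $i$ iff $S_i$ (read as the conjunction of its formulas) is satisfiable. This type map is precisely what links the multiple-world semantics of \PLTLz to the $2^n$-way case analysis underlying $\Lmc_\Phi$.

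For the forward direction, I would assume $\Pmc=(\Imc,P_\Imc)\models\Phi$ and define $x_i:=\sum_{I\in\Imc,\ \tau(I)=i}P_\Imc(I)$, aggregating the probability mass of all worlds of type $i$. Since each $I\in\Imc$ has exactly one type, the $x_i$ partition the total mass, so they are non-negative and sum to $1$, covering the first two lines of $\Lmc_\Phi$. For the probabilistic lines, observe that $\{I\in\Imc\mid I\models\varphi_j\}$ is exactly the union of the type classes whose $j$th bit is $1$, whence $\sum_{\text{$j$th position is 1}}x_i=P_\Imc(\{I\in\Imc\mid I\models\varphi_j\})\bowtie p_j$ by the model condition. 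Finally, if $S_i$ is unsatisfiable then no world can have type $i$, forcing $x_i=0$; so $(x_i)_i$ solves $\Lmc_\Phi$.

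For the converse, I would take a solution $(x_i)_{0\le i<2^n}$ of $\Lmc_\Phi$. Whenever $x_i>0$, the last line of $\Lmc_\Phi$ guarantees that $S_i$ is satisfiable, so I fix an \LTLf interpretation $I_i\models\bigwedge S_i$; by definition $I_i\models\varphi_j$ iff the $j$th bit of $i$ is $1$. Setting $\Imc:=\{I_i\mid x_i>0\}$ and $P_\Imc(I_i):=x_i$, distinct positive-probability indices yield genuinely distinct interpretations (they disagree on at least one $\varphi_j$), so $\Imc$ is a well-defined finite set and $P_\Imc$ a probability distribution by the first two lines. Reading the third line backward then gives $P_\Imc(\{I\in\Imc\mid I\models\varphi_j\})=\sum_{\text{$j$th position is 1}}x_i\bowtie p_j$, so $\Pmc\models\Phi$.

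The argument is almost entirely bookkeeping, so I do not anticipate a genuine obstacle; the only place requiring care is the boundary between satisfiable and unsatisfiable scenarios. Concretely, I must argue that $x_i=0$ is both \emph{necessary} in the forward direction (an unsatisfiable $S_i$ admits no world) and \emph{ignorable} in the backward direction (such $i$ contribute nothing to $\Imc$), which is exactly what the last line of $\Lmc_\Phi$ encodes. Establishing that these two roles coincide is the crux that makes the type map a clean correspondence rather than a mere inequality.
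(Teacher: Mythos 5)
Your proof is correct and follows essentially the same route as the paper, which only sketches this argument informally (each variable $x_i$ is read as the probability mass of the scenario $S_i$, with unsatisfiable scenarios forced to zero, as illustrated in Example~\ref{exa:run}). Your type map $\tau$ and the two-way translation between solutions of $\Lmc_\Phi$ and probabilistic models is precisely the formalization the paper intends, so nothing is missing.
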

To construct $\Lmc_\Phi$, one must solve $2^n$ \LTLf satisfiability tests, each requiring polynomial
space~\cite{SiCl85}. Solving this system of inequalities requires polynomial time on the number of variables; i.e., 
exponential on $n$. Overall, it needs exponential time on $n$, but only polynomial space on the length of $\Phi$.

\begin{theorem}
\PLTLz satisfiability is decidable in exponential time on the number of probabilistic formulas, but in polynomial space on their total
length.
\end{theorem}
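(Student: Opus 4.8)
The plan is to read off both bounds directly from the reduction of the preceding theorem, which equates satisfiability of $\Phi=\{\prob{\bowtie p_i}\varphi_i\mid 1\le i\le n\}$ with solvability of the system $\Lmc_\Phi$. The decision procedure has two phases: (i) for each of the $2^n$ scenarios, determine whether the set $S_i$ is satisfiable, thereby fixing which constraints $x_i=0$ are forced; and (ii) decide whether the resulting system $\Lmc_\Phi$ admits a solution. Since the theorem makes two separate claims, I would analyse the time and space costs of these phases independently: exponential time in the number $n$ of probabilistic formulas, and polynomial space in the total length $L$ of $\Phi$.

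For the time bound, phase (i) performs $2^n$ \LTLf satisfiability tests, one per scenario; each $S_i$ is a conjunction of at most $n$ formulas (each $\varphi_i$ or its negation) of length bounded by $L$, and each such test is decidable in polynomial space, hence in single-exponential time, by~\cite{SiCl85}. Phase (ii) solves a linear program with $2^n$ variables, feasible in time polynomial in the number of variables, i.e.\ $2^{O(n)}$. Summing the two phases, the procedure runs in exponential time, the probabilistic part contributing the dominant $2^n$ blow-up in $n$, which gives the first claim.

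The space bound is the delicate part, and is where I expect the main obstacle to lie: naively $\Lmc_\Phi$ has $2^n$ variables, so merely writing it down — let alone feeding it to a generic LP routine — uses space exponential in $n$, hence in $L$. The observation that circumvents this is that the only \emph{global} constraints of $\Lmc_\Phi$ are the normalisation $\sum_i x_i=1$ together with the $n$ probability constraints $\sum_{i:\,j\text{th bit}=1}x_i\bowtie p_j$; every other constraint merely fixes an individual variable to $0$ or bounds it below. By standard linear-programming theory (conversion to equational form with $n+1$ equalities, then taking a basic feasible solution), whenever $\Lmc_\Phi$ is solvable it admits a solution in which at most $n+1$ of the variables $x_i$ are nonzero. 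Satisfiability can therefore be certified by a \emph{small} witness: a set $B$ of at most $n+1$ scenario indices, each a satisfiable $S_i$, together with a feasible probability assignment to $\{x_i\mid i\in B\}$ (all other variables set to $0$).

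This yields a nondeterministic polynomial-space algorithm in $L$. Guess the set $B$ (each index is an $n$-bit string, so $B$ occupies $O(n^2)\subseteq O(L^2)$ space); verify the satisfiability of each $S_i$ with $i\in B$ by running the \LTLf tests one at a time and reusing the polynomial working space; and finally check feasibility of the restricted system over the at most $n+1$ variables of $B$, which now has polynomially many variables and constraints in $L$ and is solvable in polynomial space. Correctness follows because a support-bounded solution exists iff $\Lmc_\Phi$ is solvable, which by the preceding theorem is equivalent to satisfiability of $\Phi$. Applying Savitch's theorem~\cite{Savi-70} determinises this into a polynomial-space procedure, establishing the second claim; in particular, when the number of probabilistic constraints is bounded by a constant, the overall procedure matches the \PS bound of classical \LTLf.
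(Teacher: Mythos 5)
Your proposal is correct, and for the space bound it is genuinely more complete than the paper's own justification. The paper proves this theorem only via the preceding remark: construct $\Lmc_\Phi$ by running the $2^n$ \LTLf satisfiability tests one at a time (each in \PS), then solve the system in time polynomial in its $2^n$ variables; the polynomial-space claim is then simply asserted, without addressing the obstacle you explicitly name, namely that $\Lmc_\Phi$ has exponentially many variables in $n$ (hence in the total length), so the ``construct and solve the whole system'' procedure does not by itself run in polynomial space when $n$ is unbounded. Your time analysis coincides with the paper's, but your space analysis adds the key missing ingredient: the small-support lemma (a solvable $\Lmc_\Phi$ admits a solution with at most $n+1$ nonzero variables), which converts the problem into guessing a support of at most $n+1$ satisfiable scenarios and checking feasibility of a polynomially sized restricted system, followed by Savitch's theorem. (Savitch is not even needed: one can deterministically enumerate all candidate supports, each encoded in $O(n^2)$ bits, in polynomial space.) This is what actually establishes the \PS upper bound in the total length for unbounded $n$; the paper's sketch really only covers the bounded-$n$ case that it emphasises in the remark following the theorem.

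One small repair is needed in your justification of the support bound. Since $\bowtie$ may be strict ($<$ or $>$), the route ``add slack variables to get $n+1$ equalities, then take a basic feasible solution'' is not quite sound: a vertex of the closed polyhedron may set a slack variable to zero and thereby violate a strict constraint (e.g., with $x_1+x_2=1$, $x_1>0$, $x_1<1$, both vertices of the closure fail the strict constraints). The standard fix: take any solution $x^*$ of $\Lmc_\Phi$, replace each probability constraint by the equality fixing its linear form to the value it attains at $x^*$, and take a vertex of the resulting nonempty bounded polytope. That vertex has support of size at most the rank of the $n+1$ equalities, and it satisfies the original constraints, strict or not, because the constrained linear forms keep the witnessed values. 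With this adjustment your argument goes through, and the rest of your algorithm and its correctness proof are sound.
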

In particular, if the number of formulas in $\Phi$ is bounded, satisfiability is \PS-complete, improving the
\ET lower bound for general \PLTL (Theorem~\ref{thm:EThard}).
We are more interested in deducing (probabilistic) guarantees of a process that
follows the constraints in a formula; and more importantly of traces being observed over them. Recall that in our semantics 
any execution is possible as long as there is no evidence to the contrary. In \PLTLz the uncertainty is stated at the beginning; 
that is, we do not know which of the formulas $\phi_1,\ldots,\phi_n$ are satisfied, but once a trace has chosen 
its path, it remains in it without further uncertainty arising later on.

As before, we follow an optimistic approach and try to find the most likely scenarios and traces
that fit the constraints in the formula. Recall that the solution space of $\Lmc_\Phi$ yields
the probability assignments that can be consistently given to the \LTLf interpretations appearing in a probabilistic model
according to the constraints that it satisfies. Thus, maximising the value of a variable $x_i$ yields the maximum probability that
the set $S_i$ may be assigned in a model.

For each $i, 0\le i< 2^n$\negmedspace, let $\max_i$ be the solution of maximising $x_i$ subject to $\Lmc_\Phi$. Note that each
variable is maximised independently of the others, and hence the values $\max_i$ do not form a probability distribution
\emph{per se}; their sum may be greater than 1. The values $\max_i$ express the \emph{optimistic} position
of assigning the highest possible probability to the traces satisfying $S_i$.
For the formula $\Phi_1$ from Example~\ref{exa:run}, the answers to
the maximisation problems for the different variables correspond precisely to the values in the solution presented; namely,
$\max_{00}=0$, $\max_{01}=0.2$, $\max_{10}=0.3$, and $\max_{11}=0.5$.

The question of finding the mlts or simply the \emph{most likely scenario}
can be answered using the values $\max_i$. Take the indices $j$ where $\max_j$
is the largest among all variables; i.e., $j\in\argmax\{\max_i \mid 0{\le} i{<} 2^n\}$.
Each $S_j$ is a most likely scenario, and every trace satisfying $S_j$ is an mlt, with probability $\max_j$. In our
example, $S_{11}$ is the most likely scenario; i.e, we expect to observe a trace satisfying
$\Diamond a$ and $\Box(a\to\Diamond b)$.

If $\Phi$ represents a process, the mlts are those that we would expect to observe in an execution of the process
in the absence of other information. In general, it is more interesting to predict the future behaviour of the process, given
some observation of its first steps. Given a (partial) trace $t$, corresponding to the prefix of a full process, 
we want to find the most likely scenario where $t$ can happen, and predict the potential future evolution of the trace.
Given a set $S$ of \LTLf formulas and a prefix $t$, $S$ \emph{accepts} $t$ (denoted by $S\Vvdash t$)
iff there is a suffix $s$ such that $S\models t\cdot s$.
In words, $S$ accepts a prefix if it can be extended into a trace that satisfies all the conditions in $S$.
To find the most likely scenario accepting a prefix, and a suffix extending it to a successful trace, we generalise the idea 
described for the case without prefix.

Let $\acc(t):=\{i \mid S_i\Vvdash t\}$ be the set of indices $j$ s.t.\ $S_j$ accepts $t$, and let $j\in\argmax_{i\in\acc(t)}\{\max_i\}$
be an index with the maximum value in the set of solutions from the maximisation problems of $\Lmc_\Phi$. Then,
$S_j$ is a most likely scenario given $t$, and any trace extending $t$ accepted by $S_j$ is an mlt. For $\Phi_1$ in
Example~\ref{exa:run}, the most likely scenario for any finite prefix $t$ is always $S_{11}$, and $t\cdot ab$ is always a trace accepted 
by this set. In general, however, the most likely scenario may change as a prefix grows.

\begin{example}
\label{exa:mod}
Let $\Psi_1:=\{\prob{\le 0.5}\Diamond a, \prob{\le 0.6}\Box(a\to\Diamond b)\}$, which is
very similar to $\Phi_1$ (Example~\ref{exa:run}), but with different probabilities. We get
$\max_{00}=0$, $\max_{01}=0.5$, $\max_{10}=0.4$, and $\max_{11}=0.1$. For the empty prefix $\varepsilon$ and the prefix $\neg a$,
the most likely scenario is $S_{01}$, which holds with probability 0.5, and a most likely continuation would append them with a 
finite number of observations of $\neg a$. If at the second point in time we observe $a$ (making the prefix $\neg a\cdot a$) then 
$S_{01}$ does not accept this prefix anymore, and the most likely scenario becomes $S_{10}$: we will eventually 
observe an $a$ after which $b$ will never be observed anymore.
\end{example}
The method for finding the most likely scenario is formalised in Algorithm~\ref{alg:mls},
\begin{algorithm}[tb]
\DontPrintSemicolon
\KwData{$\Phi{=}\{\prob{\bowtie p_i}\varphi_i{\mid} 1{\le} i{\le} n\}$ \PLTLz\negmedspace formula, $t$ prefix}
\KwResult{Index of the most likely scenario for $t$ in $\Phi$}
$mls \gets -1$ \;
\For{$0\le i< 2^n$}{
  compute $\max_i$ \;
  \If{$S_i\Vvdash t$ and $\max_i > \max_{mls}$ }{
    $mls\gets i$
  }
}
{\bf Return} $mls$
\caption{Most likely scenario for $t$ over $\Phi$.}
\label{alg:mls}
\end{algorithm}
where $\max_{-1}:=0$.
Note that an expensive part of this algorithm is the computation of the values $\max_i$. However, this computation can be made
offline, as a preprocessing step before the algorithm is called, as these values remain invariant for any call. A second point to 
consider is that the set $J$ monotonically decreases as the trace $t$ grows. More precisely, for every $t,s$, if $S\Vvdash t\cdot s$,
then $S\Vvdash t$. Hence, if we are monitoring the evolution of a process, trying to find out the most likely continuation of the currently
observed trace, then after every newly observed step, we only need to update the set $J$ to remove those $S_i$s not accepting
the prefix anymore. Finally, to avoid unnecessary tests, we can exclude from the {\bf for} loop all indices $i$ where $\max_i=0$:
$\max_i=0$ means that the system should not observe any trace satisfying $S_i$. If the most likely scenario is one that has
probability 0, then the observed prefix is violating the conditions described by $\Phi$.

\begin{proposition}
Algorithm~\ref{alg:mls} returns the index of the most likely scenario, given a prefix.
\end{proposition}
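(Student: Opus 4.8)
The plan is to reduce the statement to the characterisation of most likely scenarios established immediately before the algorithm: $S_j$ is a most likely scenario given $t$ precisely when $j\in\argmax_{i\in\acc(t)}\{\max_i\}$, where $\acc(t)=\{i\mid S_i\Vvdash t\}$. Thus it suffices to show that Algorithm~\ref{alg:mls} returns an index in this set (with the returned $-1$ flagging the degenerate case discussed below). The argument is essentially a loop invariant, so most of it is routine; I will isolate the one delicate point.

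First I would record two preliminary facts that make the target quantity well defined. (1) $\acc(t)$ is never empty: any extension $t\cdot s$ of the prefix $t$ is a trace, and as an \LTLf interpretation it satisfies, for each $i$, exactly one of $\varphi_i$, $\neg\varphi_i$; hence it satisfies exactly one combination $S_k$, and $S_k\models t\cdot s$ witnesses $k\in\acc(t)$. (2) Any unsatisfiable combination lies outside $\acc(t)$ (it has no trace at all, so no extension of $t$), and by the last line of $\Lmc_\Phi$ such $S_i$ have $\max_i=0$; more generally, any scenario whose probability is forced to $0$ by the inequalities has $\max_i=0$. Together these ensure that $\max_{i\in\acc(t)}\{\max_i\}$ is well defined and that the indices the algorithm may legitimately return are exactly the accepting, positively weighted ones.

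The core is the loop invariant: after the iteration for index $i$ (the initial state corresponding to $i=-1$), the variable $mls$ equals an index $k\le i$ with $S_k\Vvdash t$ attaining $\max\{\max_j\mid j\le i,\ S_j\Vvdash t,\ \max_j>0\}$, and $mls=-1$ if no $j\le i$ in $\acc(t)$ has $\max_j>0$. I would prove this by induction on the loop counter: the guard $S_i\Vvdash t$ restricts updates to accepting indices, while the strict test $\max_i>\max_{mls}$, under the convention $\max_{-1}=0$, both keeps $mls$ pointing to a running maximiser and prevents any index with $\max_i=0$ from ever overwriting the sentinel $-1$ (since then $\max_i>\max_{mls}$ reads $0>\max_{mls}$, which fails whether $\max_{mls}=0$ or $\max_{mls}>0$). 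At termination $i=2^n-1$, so the invariant yields that $mls$ realises $\max_{j\in\acc(t)}\{\max_j\}$ whenever this value is positive; that is, $mls\in\argmax_{i\in\acc(t)}\{\max_i\}$, and by the characterisation above $S_{mls}$ is a most likely scenario given $t$.

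The only real obstacle is the boundary between the positive and zero cases produced by the strict inequality together with the sentinel $\max_{-1}=0$. If every accepting scenario has $\max_i=0$, the test never fires and the algorithm returns $-1$ rather than an element of $\argmax_{i\in\acc(t)}\{\max_i\}$ (which in that case is the entire, uniformly zero, set $\acc(t)$). I would resolve this exactly as the surrounding discussion intends: $\max_i=0$ for all accepting $i$ means no positively weighted scenario can accommodate $t$, so the observed prefix violates the constraints of $\Phi$, and $-1$ is the designated flag for ``no positive-probability most likely scenario''. Hence on every prefix consistent with $\Phi$ the algorithm returns a genuine most likely scenario, and on the remaining prefixes it correctly reports their inconsistency, which establishes the proposition.
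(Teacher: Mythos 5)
Your proof is correct. The paper in fact gives no explicit proof of this proposition---it is stated as an immediate consequence of the characterisation developed just before it (namely that $S_j$ is a most likely scenario given $t$ iff $j\in\argmax_{i\in\acc(t)}\{\max_i\}$), which Algorithm~\ref{alg:mls} transparently implements---so your loop invariant, the observation that $\acc(t)\neq\emptyset$, and your treatment of the sentinel $\max_{-1}=0$ simply formalise what the paper leaves implicit, including the degenerate case where every accepting scenario has $\max_i=0$ and the returned $-1$ flags a prefix violating $\Phi$, which the paper handles only in its informal discussion surrounding the algorithm.
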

Interestingly, assuming that all the values $\max_i$ have been computed before, Algorithm~\ref{alg:mls} can be executed to
use only polynomial space. The information to control the {\bf for} loop requires at most $n$ bits, and the two tests within this loop
require polynomial space.

Note that finding the probability of the most likely
scenario (and trace) is akin to monitoring agreement with a model. Analogously, one can extend the
task to monitoring a complex \PLTL property $\psi$. For the maximum likelihood of accepting $\psi$,
we use Algorithm~\ref{alg:mls}, but now considering whether $S_i\cup\{\psi\}\Vvdash t$; i.e.,
finding the scenarios where $\psi$ may still be satisfied given the knowledge of $t$. This is analogous to the notion of
\emph{eventual satisfaction} in monitoring. Other notions like \emph{current satisfaction}, or \emph{permanent satisfaction}
can be dealt with accordingly, modifying the notion of acceptability of a trace w.r.t.\ a set of \LTLf constraints~\cite{DDGM14}.

\subsection{Discovering \PLTLz Patterns from Event Log Data}

\PLTLz formulas can be automatically mined from event log data using state-of-the-art \emph{declarative process discovery} algorithms within \emph{process mining}. Process mining focuses on the continuous improvement of business processes based on factual data \cite{Aal16}. Such data are stored in a so-called \emph{event log}, where each event refers to an
\emph{activity} (a well-defined step in a process) and is related to a \emph{case} (a
\emph{process instance}). Events in a case are \emph{ordered} and seen as an execution (or \emph{trace}) of the process. 
A core process mining task is \emph{process discovery}, which learns a process model that reproduces the traces
contained in the log. In declarative process discovery, the target model is specified using 
rules/constraints, like the \LTLf patterns adopted by the \declare process modelling language \cite{PeSV07}.

\citeauthor{MaCV12} \shortcite{MaCV12} developed a two-phase method to automatically infer Declare constraints from
event logs. In the first phase, candidate constraints to be mined are generated by an algorithm called Apriori.
This algorithm returns
frequent activity sets indicating a high correlation between activities involved in an activity set. Highly correlated sets are used to instantiate, in any possible ways, the \declare patterns. For example, considering the frequent activity set $\{ a, b \}$ and the \declare pattern of \constraint{response}, the two \LTLf constraints $\square (a \rightarrow \lozenge b)$
and $\square (b \rightarrow \lozenge a)$ are generated.  In the second phase, the set of so-generated constraints is filtered by retaining only ``relevant'' constraints, where relevance is measured using metrics such as that of \emph{support}:
the proportion of traces satisfying the constraint.

What makes Apriori interesing in our setting is that support can be interpreted as the constraint probability: the discovery of $\LTLf$ 
formula $\varphi$ with support $p$ (that is, appearing in $100p\%$ of the traces) can be interpreted as the discovery of the $\PLTLz$ 
formula $\prob{= p}\varphi$. 

\section{Conclusions}

We have introduced a new probabilistic temporal logic \PLTL based on a novel superposition semantics, and its sublogic \PLTLz where
this semantics collapses to the standard multiple-world approach. These logics are specifically crafted for predicating about uncertainty
in dynamic systems whose executions eventually finish. We studied the main properties of the logics, and provided automata-theoretic methods for extracting
relevant information from them.

In future work, we plan to implement the algorithms for \PLTLz and apply them to the declarative modelling and analysis of business processes, considering in particular monitoring and conformance checking.

\bibliographystyle{aaai}
\bibliography{main-bib}

\clearpage 

\appendix

\section{Appendix A: Proofs}

\satempty*
\begin{proof}{}
[$\Leftarrow$] Suppose first that $\Lmc(\Amc)\not=\emptyset$, and let $T\in \Lmc(\Amc)$. This means that there exists a successful run $\rho$ of
\Amc over $T$, which maps every node $w\in T$ with an atom $\rho(w)$. Let $A$ be the set of all propositional variables appearing
in $\phi$. Define the function $\cdot^I: T\to 2^A$ by $w^I:=\rho(w)\cap A$. Moreover, for every node $w$ with $k$ successors, 
as $\rho$ is a successful run, it holds that $(\rho(w1),\ldots,\rho(wk))\in T_S$ for some $S\in\Smc(\rho(w))$. The latter means that 
the system $\Imf(S)$ has a solution for the (ordered) variables $x_1,\ldots,x_k$ in $[0,1]$. Hence, we define the function 
$P:T\setminus\{\varepsilon\}\to [0,1]$ where $P(w\ell)$ is the solution of the system in $\rho(w)$ for the variable $x_\ell$. 
Overall, this defines an interpretation $I=(T,\cdot^I,P)$. 
We show by induction
on the structure of the formulas that for every $\psi\in\csub(\phi)$ and every $w\in T$, $I,w\models \psi$ iff $\psi\in\rho(w)$.
In particular, since $\rho$ is such that $\phi\in\rho(\varepsilon)$, this implies that $I$ is a model of $\phi$.

For a propositional variable $a\in A$ the result holds trivially by construction, so we focus on the remaining constructors. Assume
that the result holds for every formula in $\csub(\psi)\cup\csub(\psi_1)\cup\csub(\psi_2)$.

\noindent[$\neg$] $I,w\models\neg\psi$ iff $I,w\not\models\psi$ iff (induction hypothesis) $\psi\notin\rho(w)$ iff (atom maximality)
	$\neg\psi\in\rho(w)$.
	
\noindent[$\land$] $I,w\models\psi_1\land\psi_2$ iff $I,w\models\psi_1$ and $I,W\models\psi_2$ iff (induction hypothesis)
	$\{\psi_1,\psi_2\}\subseteq\rho(w)$ iff (atom condition (ii)) $\psi_1\land\psi_2\in\rho(w)$.
	
\noindent[$\bigcirc$] $I,w\models\bigcirc\psi$ iff $w$ is not a leaf and for every $wi\in T$ $I,wi\models\psi$ iff for every $wi\in T$, 
	$\psi\in\rho(wi)$ iff (definition of the transition relation) $\bigcirc\psi\in\rho(w)$.
	
\noindent[$U$] $I,w\models \psi_1 U \psi_2$ iff (i) $I,w\models\psi_2$ or (ii) $I,w\models\psi_1$ and for all $wi\in T$, 
	$I,wi\models \psi_1 U\psi_2$. We show that $\psi_1 U\psi_2\in\rho(w)$ by induction on the subtree rooted at $w$. 
	If $w$ is a leaf node, only case (i) is possible, and so $I,w\models \psi_1 U \psi_2$ iff $\psi_2\in \rho(w)$ which means that 
	$\psi_1 U \psi_2\in \rho(w)$ by the definition of an atom.
	If $w$ is not a leaft node. Case (i) is treated as for the leaf nodes. Case (ii) holds iff $\psi_1\in\rho(w)$ and, by the second
	induction, for every $wi\in T$, $\psi_1 U\psi_2\in\rho(wi)$, which implies by the definition of the transition relation that 
	$\bigcirc(\psi_1 U \psi_2)\in\rho(w)$, and since $\rho(w)$ is an atom, $\psi_1 U\psi_2\in\rho(w)$.
	
\noindent[$\prob{}$] $I,w\models\prob{\bowtie p}\psi$ iff $\sum_{wi\in T,I,wi\models\psi} P(wi)\bowtie p$. By construction and the
induction hypothesis, the latter holds iff $\sum_{wi\in T,\psi\in\rho(wi)}x_i\bowtie p$, where the $x_i$s for the solution of the system in
$\rho(w)$. But this is only possible if $\prob{\bowtie p}\psi\in Q_i\subseteq\rho(w)$.

Hence, $I$ is a model of $\phi$ and $\phi$ is satisfiable. This finishes this direction of the proof.

\noindent[$\Rightarrow$] Conversely, suppose that $\phi$ is satisfiable, and let $I=(T,\cdot^I,P)$ be a model of $\phi$. We will
use this tree to construct a successful run of \Amc, but it needs to be adapted to a simplified form. Given a node $w\in T$, let 
$\Pmc(w)\subseteq\csub(\phi)$ be the set of all probabilistic formulas $\prob{\bowtie p}\psi\in\csub(\phi)$ such that 
$I,w\models \prob{\bowtie p}\psi$, and define $S(w)\subseteq 2^{\Pmc(w)}$ to be the set of subsets $O\subseteq \Pmc(w)$
such that there is a successor $wi\in T$ that satisfies $I,wi\models \psi$ for all $\prob{\bowtie p}\psi\in O$ and
$I,wi\not\models \psi$ for all $\prob{\bowtie p}\psi\notin O$. Given an $O\in S(w)$, if there are two successors $wi, wj\in T$ that 
satisfy the previous conditions, it is possible to \emph{prune} the tree $T$ by removing all nodes of the form $wjv, v\in \nats^*$,
and setting $P'(wi):=P(wi)+P(wj)$. It is easy to see that $I'=(T',\cdot^{I'},P')$, where $T'$ is the prunned tree and $\cdot^{I'}$ is 
$\cdot^I$ restricted to $T'$ is also a model of $\phi$. Let $I_0=(T_0,\cdot^{I_0},P_0)$ be the result of applying this prunning procedure
to all nodes in the original model (in a top-down manner). Then, it is a simple exercise to verify that the labelling
$\rho:T_0\to\At(\phi)$ where $$\rho(w)=\{\psi\in\csub(\phi)\mid I_0,w\models\psi\}$$
is in fact a successful run of \Amc, and hence $\Lmc(\Amc)\not=\emptyset$.
\end{proof}

\EThard*
\begin{proof}
We prove \ET-hardness by a reduction from the intersection non-emptiness problem for deterministic automata over labelled trees.
A \emph{deterministic tree automaton over labelled trees} is a tuple $\Amc=(\Qmc,\Sigma,\Delta,I,F)$ where \Qmc, $I$, and $F$
are as in the preliminaries, $\Sigma$ is a finite \emph{alphabet}, and $\Delta:\Qmc\times\Sigma\to \bigcup_{i\le k}\Qmc^k$
is a total \emph{transition function}. 
Given a $\Sigma$-labelled tree, a run of this automaton is a \Qmc-labelling that is consistent
with the transition function. All the associated notions are defined in the obvious way. The intersection non-emptiness problem
for these automata consists in deciding, given $n$ such automata $\Amc_i, 1\le i\le n$ with disjoint sets of states, whether 
$\bigcap_{i=1}^n \Lmc(\Amc_i)\not=\emptyset$. This problem is known to be \ET-hard \cite{Seid-94}.

Given a deterministic automaton $\Amc=(\Qmc,\Sigma,\Delta,I,F)$, we build the \PLTL formula $\varphi_\Amc$ as follows. 
The propositional variables appearing in the formula will be given by the elements of $\Qmc\cup\Sigma\cup\{1,\ldots,k\}$; that is, the 
states, the symbols, and the first $k$ natural numbers. Intuitively, an interpretation and node mapping $q\in \Qmc$ to means that the state 
$q$ holds in that element, and analogously for $\sigma\in\Sigma$. The variables $1,\ldots, k$ are used to distinguish the different
successors of a node in a tree. This intuition will become more clear after the construction of the formula. In addition, we have
a new propositional variable $x_\END$ that identifies the leaf nodes.

For every $(q,\sigma)\in\Qmc\times\Sigma$, let $\Delta(q,\sigma)=(q^{q,\sigma}_1,\ldots,q^{q,\sigma}_{k_{q,\sigma}})$ 
and define the formulas $\psi_{q,\sigma}$, $\psi_Q$, $\psi_\Sigma$, and $\psi_N$ as in Figure~\ref{fig:fml}.
\begin{figure*}[t!]
\begin{align*}
\psi_{q,\sigma} := {} & 
	\begin{cases}
		q \land \sigma \to \left(x_\END \lor \bigwedge_{i=1}^{k_{q,\sigma}} \prob{\ge 1/k_{q,\sigma}}(q^{q,\sigma}_i \land i) \right) & q\in F \\
		q \land \sigma \to \left(\neg x_\END \land \bigwedge_{i=1}^{k_{q,\sigma}} \prob{\ge 1/k_{q,\sigma}}(q^{q,\sigma}_i \land i) \right) & q\notin F
	\end{cases} \\
\psi_Q := {} & \bigvee_{q\in \Qmc} \left( q \land \bigwedge_{q'\in \Qmc\setminus\{q\}} \neg q' \right) \\
\psi_\Sigma := {} & \bigvee_{\sigma\in \Sigma} \left( \sigma \land \bigwedge_{\sigma'\in \Sigma\setminus\{\sigma\}} \neg \sigma' \right) \\
\psi_N := {} & \bigwedge_{i=1}^k \left(i \to \bigwedge_{j\not=i} \neg j \right)
\end{align*}
\caption{Formulas describing the automaton for the proof of Theorem~\ref{thm:EThard}.}
\label{fig:fml}
\end{figure*}
Then, we set
\[
\varphi_\Amc := \bigvee_{q\in I}q\land
	\Box\left( \psi_Q \land \psi_\Sigma \land \bigwedge_{(q,\sigma)\in \Qmc\times\Sigma}\psi_{q,\sigma}\right).
\]
Note that the length of $\varphi_\Amc$ is polynomially bounded by the size of \Amc.
We first show that every model of $\varphi_\Amc$ can be transformed into a tree accepted by \Amc, and conversely,
every tree accepted by \Amc, together with a successful run, is a representation of a model of $\varphi_\Amc$. 

Let $J=(T,\cdot^J,P)$ be a model of $\varphi$. By construction, (see formulas $\psi_Q$ and $\psi_\Sigma$), for every node $w\in T$ there 
is exactly one $\sigma\in \Sigma$ and one $q\in \Qmc$ such that $\sigma,q\in w^J$. Abusing the notation, we call these elements
$\Sigma(w)$ and $Q(w)$, respectively. We can assume, w.l.o.g., that for every non-leaf
node $w\in T$, if $q,\sigma\in w^J$, then for every $j,1\le j\le k_{q,\sigma}$, $wj\in T$ and $j\in wj^J$.%
\footnote{Otherwise, one can merge different successors $w\ell$ such that $j\in w\ell^J$ and reorder the successors to satisfy the 
condition.}
We construct the labelled tree $T_J: T\to \Sigma$ where $T_J(w)=\Sigma(w)$ for all $w\in T$. We show that $T_J\in \Lmc(\Amc)$, by
using the function $Q:T\to \Qmc$ to build a successful run of \Amc on this tree. Note that for the root node $\varepsilon$, 
$Q(\varepsilon)\in I$, because $J$ is a model of $\bigvee_{q\in I}q$. For every leaf node $w\in T$, $Q(w)\in F$ because otherwise
the formula $\psi_{q,\sigma}$ guarantees that $w$ must have a successor node. Finally, given a non-leaf node $w\in T$, let 
$q=Q(w)$ and $\sigma=\Sigma(w)$ and $\Delta(q,\sigma)=(q^{q,\sigma}_1,\ldots,q^{q,\sigma}_{k_{q,\sigma}})$. Since $J$ is a model 
of $\psi_{q,\sigma}$, and by the assumption stated before, $w$ must have $k_{q,\sigma}$ successors, and is such that
$Q(wj)=q_j^{q,\sigma}$ for all $j,1\le j\le k_{q,\sigma}$. Thus, the labelling of $T$ provided by the function $Q$ forms a successful run,
and $T_J\in \Lmc(\Amc)$.

For the second claim, let $T$ be a $\Sigma$-labelled tree, and $Q:T\to\Qmc$ a successful run of \Amc over $T$. We build a model
$J=(T,\cdot^J,P)$ of $\varphi_\Amc$ as follows. For every $w\in T$, $\sigma\in\Sigma$, and $q\in \Qmc$, we have 
$\sigma\in w^J$ iff $T(w)=\sigma$ and $q\in w^J$ iff $Q(w)=q$. In addition, for every node $wj\in T$, $j\in wj^J$, and for every
leaf node $w$ $x_\END\in w^J$. Finally, if $w$ has $\ell$ successors, then for every $i,1\le i\le \ell$, $P(wi)=1/\ell$. It is easy to see
that, since $Q$ is a successful run of \Amc over $T$, this interpretation satisfies all the constraints of $\varphi_\Amc$, and hence it
is a model of this formula.

Given $n$ deterministic tree automata $\Amc_1,\ldots,\Amc_n$, 
we construct the \PLTL formula $\varphi:=\bigwedge_{i=1}^n\varphi_{\Amc_i}$,
whose length is polynomially bounded by the total length of the $n$ automata.  We show, making use of the previous arguments,
that this formula is satisfiable iff the intersection of these automata is not empty.

If there is a tree $T$ in the intersection of these automata, then there is a successful run for $\Amc_i$ over $T$ for all 
$i,1\le i\le n$. We can use these successful runs to build a model of $\varphi$ as did in the previous paragraph. Conversely,
let $J$ be a model of $\varphi$. In particular, $J=(T,\cdot^J,P)$ is a model of $\varphi_{\Amc_i}$ for all $i,1\le i\le n$. Thus, as we have 
shown already, the labelled tree $T_J$ (which only depends on the alphabet symbols $\Sigma$) is in $\Lmc(\Amc_i)$ for all
$i$; i.e., $T_J\in \bigcap_{i=1}^n \Lmc(\Amc_i)$, and hence the intersection is not empty.
\end{proof}

\probmlt*
\begin{proof}
Notice first that the probability of the mlt is zero iff $\phi$ is unsatisfiable. In this case, the automata $\dddot{\Amc_\Phi}$ and 
$\Bmc_\phi$ become empty, and hence the behaviour of the latter is zero as well. So we are only interested in cases where this
probability is greater than 0.

Consider first a run $\rho=q_0,\ldots,q_n$ of $\Bmc_\phi$ such that $\wt(\rho)>0$. In particular this means that 
$\wt(q_i,q_{i+1})>0$ for all $i,1\le i<n$. By construction, this means that for every $i,1\le i<n$ there is a transition of $\dddot{\Amc_\phi}$
(i.e., a tuple $\delta\in\dddot\Delta$) of the form $(q_i,\ldots,q_{i+1},\ldots,q_k)$ such that $\wt(q_i,q_{i+1})$ is the maximum 
value that can be given to a successor of a node satisfying $q_i$ which satisfies $q_{i+1}$. Moreover, all states appearing in this 
transition are \emph{good} states, which means that a successful run can still be constructed from them. Thus, there is a 
successful run of $\dddot{\Amc_\phi}$ (and hence of $\Amc_\phi$) which has $\rho$ as a branch. If $\ini(q_0)=1$ (that is,
if $q_0$ is an initial state of $\Amc_\phi$), as in the proof of Theorem~\ref{thm:satempty}, we can build a model of $\phi$ containing 
a branch $t=\rho(q_0)\cap A,\ldots,\rho(q_n)\cap A$, where $A$ is the set of all propositional variables in $\phi$. Moreover, the
probability of this trace in this model is exactly $\wt(\rho)$. Thus, to summarise, for every successful run $\rho$ of $\Bmc_\phi$, there 
is a model $I$ and a trace $t$ such that $\wt(\rho)=P_I(t)$. This implies that the probability of the mlt is greater or equal to
$\|\Bmc_\phi\|$.

Conversely, consider a model $I$ containing the trace $t$. As in the proof of Theorem~\ref{thm:satempty}, we can assume w.l.o.g.\
that this model translates into a successful run $\rho$ of $\dddot{\Amc_\phi}$. In this model, it holds that for every non-root node $wi$,
$P(wi)\le \ms_{S,\rho(w)}(Q)$, where $S$ and $Q$ are the ones obtained from the transition used in $\rho$. In particular,
$P(wi)\le \ms_{\rho(w)}(Q)$, and hence $P_I(t)\le \|\Bmc_\phi\|$. Since this is true for all traces and all models, it follows that
the probability of the most likely trace is at most $\|\Bmc_\phi\|$.

Piecing both parts together yields the desired result.
\end{proof}

\mlt*
\begin{proof}
By construction, the initial states of $\overline{\Bmc_\phi}$ are exactly those that maximise the likelihood of the trace, and likewise
transitions are only allowed when they preserve the maximum possible probability. That is, for every successful run $\rho$
of $\overline{\Bmc_\phi}$, if seen over the weighted automaton $\Bmc_\phi$ we get that $\wt(\rho)=\|\Bmc_\phi\|$. Following
the arguments from the proof of Theorem~\ref{thm:probmlt}, such a run corresponds to a trace $t$ in a model $I$ such that
$P_I(t)=\|\Bmc_\phi\|$, but since the latter is the probability of the most likely traces, $t$ must be an mlt as well.
\end{proof}


\section{Appendix B: Example}
\label{app:example}

We now provide a fully developed example of the methods for reasoning with \PLTL, over a slightly more complex formula.
Consider the formula
\[
\psi := \bigcirc\neg b \land \prob{\le 0.7} (a U b) \land \prob{\le 0.6}\bigcirc(\neg a\land \neg b),
\]
which is satisfiable (see Figure~\ref{fig:appmodel} for a model).

The set $\csub(\psi)$ is%
\footnote{We slightly simplify removing three irrelevant conjunctions, and already use the equivalence 
$\neg\prob{\le p}\phi\equiv \prob{>p}\phi$.}
\begin{align*}
\csub(\psi) := \{ & \psi, \neg\psi, \bigcirc\neg b, \neg\bigcirc\neg b, \neg b, b, \\
		& \prob{\le 0.7} (a U b), \prob{> 0.7} (a U b), a U b, \neg (a U b), \\
		& \bigcirc (a U b), \neg \bigcirc(a U b), a, \neg a, \\
		& \prob{\le 0.6}\bigcirc(\neg a\land \neg b), \prob{> 0.6}\bigcirc(\neg a\land \neg b), \\
		& \bigcirc(\neg a\land \neg b), \neg\bigcirc(\neg a\land \neg b), \\
		& \neg a\land \neg b, \neg (\neg a\land \neg b) &\}.
\end{align*}
From this set, we need to construct the class of atoms. As is the case already for \LTLf, this class contains exponentially many
elements on the length of $\psi$, and enumerating them all is not very informative. So we present only a few relevant atoms that
will be useful for highlighting the remaining properties. These are shown in Figure~\ref{fig:atoms}.
\begin{figure*}[tbh]
\begin{align*}
\atom_1 := {} & \{ \psi, \bigcirc\neg b, \prob{\le 0.7}(a U b),\prob{\le 0.6}\bigcirc(\neg a\land\neg b), \neg(a U b), \neg a, \neg b,
			\neg\bigcirc(a U b), \neg a\land \neg b, \neg\bigcirc(\neg a\land \neg b) \} \\
\atom_2 := {} & \{ \psi, \bigcirc\neg b, \prob{\le 0.7}(a U b),\prob{\le 0.6}\bigcirc(\neg a\land\neg b), \neg(a U b), \neg a, \neg b,
			\neg\bigcirc(a U b), \neg a\land \neg b, \bigcirc(\neg a\land \neg b) \} \\[2mm] 
\atom_3 := {} & \{ \neg\psi, \neg\bigcirc\neg b, \prob{\le 0.7}(a U b),\prob{\le 0.6}\bigcirc(\neg a\land\neg b), a U b, a, \neg b,
			\bigcirc(a U b), \neg(\neg a\land \neg b), \neg\bigcirc(\neg a\land \neg b) \} \\
\atom_4 := {} & \{ \neg\psi, \neg\bigcirc\neg b, \prob{\le 0.7}(a U b),\prob{\le 0.6}\bigcirc(\neg a\land\neg b), \neg(a U b), a, \neg b,
			\neg\bigcirc(a U b), \neg(\neg a\land \neg b), \bigcirc(\neg a\land \neg b) \} \\
\atom_5 := {} & \{ \neg\psi, \neg\bigcirc\neg b, \prob{\le 0.7}(a U b),\prob{\le 0.6}\bigcirc(\neg a\land\neg b), a U b, a, \neg b,
			\bigcirc(a U b), \neg(\neg a\land \neg b), \bigcirc(\neg a\land \neg b) \} \\[2mm]
\atom_6 := {} & \{ \neg\psi, \neg\bigcirc\neg b, \prob{\le 0.7}(a U b),\prob{\le 0.6}\bigcirc(\neg a\land\neg b), \neg(a U b), \neg a, \neg b,
			\neg\bigcirc(a U b), \neg a\land \neg b, \neg\bigcirc(\neg a\land \neg b) \} \\
\atom_7 := {} & \{ \neg\psi, \neg\bigcirc\neg b, \prob{\le 0.7}(a U b),\prob{\le 0.6}\bigcirc(\neg a\land\neg b), \neg(a U b), a, \neg b,
			\neg\bigcirc(a U b), \neg(\neg a\land \neg b), \neg\bigcirc(\neg a\land \neg b) \} \\
\atom_8 := {} & \{ \neg\psi, \bigcirc\neg b, \prob{> 0.7}(a U b),\prob{\le 0.6}\bigcirc(\neg a\land\neg b), a U b, a, \neg b,
			\bigcirc(a U b), \neg(\neg a\land \neg b), \neg\bigcirc(\neg a\land \neg b) \} \\
\atom_9 := {} & \{ \neg\psi, \neg\bigcirc\neg b, \prob{> 0.7}(a U b),\prob{\le 0.6}\bigcirc(\neg a\land\neg b), a U b, a, \neg b,
			\bigcirc(a U b), \neg(\neg a\land \neg b), \neg\bigcirc(\neg a\land \neg b) \} \\
\atom_{10} := {} & \{ \neg\psi, \neg\bigcirc\neg b, \prob{\le 0.7}(a U b),\prob{\le 0.6}\bigcirc(\neg a\land\neg b), a U b, a, b,
			\neg\bigcirc(a U b), \neg(\neg a\land \neg b), \neg\bigcirc(\neg a\land \neg b) \} 
\end{align*}
\caption{Some atoms for the formula $\phi$.}
\label{fig:atoms}
\end{figure*}

Note that $\atom_1$ and $\atom_2$ differ only on the last element, and in particular contain the same probabilistic formulas, hence 
$\Pmc(\atom_1)=\Pmc(\atom_2)=\{\prob{\le 0.7}(a U b),\prob{\le 0.6}\bigcirc(\neg a\land\neg b)\}$. Define the elements of 
$2^{\Pmc(\atom_1)}$ to be 
\begin{align*}
Q_{00} := {} & \emptyset, &
Q_{01} := {} & \{\prob{\le 0.7}(a U b)\}, \\
Q_{10} := {} & \{\prob{\le 0.6}\bigcirc(\neg a\land\neg b)\}, &
Q_{11} := {} & \Pmc(\atom_1).
\end{align*}
There are in total eight subsets of $2^{\Pmc(\atom_1)}$; in particular, we consider 
$S_0=\{Q_{01},Q_{10},Q_{11}\}$, 
$S_1=\{Q_{01},Q_{10}\}$, and 
$S_2=\{Q_{01},Q_{11}\}$.
Each of these sets defines a system of inequalities, which we now analyse in detail.

Consider first $\Imf(S_0)$, which is defined by%
\footnote{To improve readability, we abuse the notation and use $x_i$ to represent the variable $x_{Q_i}$.}
\begin{align*}
x_{01} \ge {} & 0 & x_{10} \ge {} & 0 & x_{11} \ge {} & 0 \\
x_{01} + x_{10} + x_{11} = {} & 1 \\
x_{01} + x_{11} \le {} & 0.7 \\
x_{10} + x_{11} \le {} & 0.6
\end{align*}
This system is satisfiable; for instance, one solution is given by $x_{01}=0.4$ and $x_{10}=x_{11}=0.3$.

For $S_1$, we obtain the system $\Imf(S_1)$
\begin{align*}
x_{01} \ge {} & 0 & x_{10} \ge {} & 0 \\ 
x_{01} + x_{10} = {} & 1 \\
x_{01}  \le {} & 0.7 \\
x_{10}  \le {} & 0.6
\end{align*}
which is also satisfiable; e.g.\ $x_{01}=x_{10}=0.5$.

Finally, consider the system $\Imf(S_2)$
\begin{align*}
x_{01} \ge {} & 0 & x_{11} \ge {} & 0 \\
x_{01} + x_{11} = {} & 1 \\
x_{01} + x_{11} \le {} & 0.7 \\
x_{11} \le {} & 0.6
\end{align*}
Clearly, this system is unsatisfiable because the constraints 
$x_{01} + x_{11} = 1$ and $x_{01} + x_{11} \le 0.7$ are in conflict with each other. 
Overall, this means that $\Smc(\atom_1)$ contains $S_0$ and $S_1$ but not $S_2$; and since
$\Pmc(\atom_1)=\Pmc(\atom_2)$, the same is true for $\Smc(\atom_2)$. Intuitively, what this means is that from
a node satisfying the formulas in $\Pmc(\atom_1)$ (namely, $\prob{\le 0.7}(a U b)$ and $\prob{\le 0.6}\bigcirc(\neg a\land\neg b)$)
there are models that have successors satisfying either the first formula alone, or the second formula alone, but not both, nor none
($S_1$), but there are no models that have successors satisfying only the first formula, or both formulas, but not only the second
nor none ($S_2$). This conclusion is based on the probabilistic constraints alone; the logical interpretation of these formulas
is considered within the automata construction as in classical \LTLf.

Ordering the sets $Q_i$ in the standard numerical order over \nats, we can see that $(\atom_3,\atom_4,\atom_5)\in T_{S_0}(\atom_1)$.
Indeed, $\atom_1$ contains the formula $\bigcirc\neg b$ and $\neg b\in \atom_i, 3\le i\le 5$. Also, $\atom_1$ does not contain
$\bigcirc(a U b)$ nor $\bigcirc(\neg a\land \neg b)$ and we observe that $a U b, \neg a\land\neg b\notin \atom_4$. Regarding the 
probabilistic formulas (i.e., condition (ii) of the definition
of $T_S$) we see that $a U b\in \atom_3\cap \atom_5$ (that is, in the first and third position of the tuples), and 
$\prob{\le 0.7}(a U b)\in Q_{01}\cap Q_{11}$ (that is, the first and third elements of $S_0$), and similarly for
$\prob{\le 0.6}\bigcirc(\neg a\land \neg b)$. Using analogous arguments, we can show that 
$$\{(\atom_3,\atom_2,\atom_5),(\atom_8,\atom_4,\atom_5),(\atom_8,\atom_2,\atom_5)\}\subseteq T_{S_0}(\atom_1).$$

Note, however, that $(\atom_3,\atom_4,\atom_5)\notin T_{S_0}(\atom_2)$. The reason for this is that 
$\bigcirc(\neg a\land \neg b)\in\atom_2$ but as we have said before $\neg a\land \neg b\notin\atom_4$, which violates the first
condition in the definition of $T_S$. Similarly, we can see that $(\atom_3,\atom_4)\in T_{S_1}(\atom_1)\setminus T_{S_1}(\atom_2)$.
In fact, for any $S\in\Smc(\atom_2)$ such that $Q_{01}\in S$ it holds that $T_S(\atom_2)=\emptyset$. To see this, notice that
since $\bigcirc(\neg a\land \neg b)\in \atom_2$, any element in a tuple in $T_S(\atom_2)$ must contain $\neg a\land\neg b$. In particular
this is true for the tuple that corresponds to the position of $Q_{01}=\{\prob{\le 0.7}(a U b)\}$. But then, that element must contain
the set $\{a U b, \neg a, \neg b\}$, which contradicts the definition of an atom. We can similarly show that any $S$ containing $Q_{11}$
defines an empty transition set for $\atom_2$.

Consider now the atom $\atom_5$ and notice that $\Pmc(\atom_5)=\Pmc(\atom_1)$ as well, and hence 
$\Smc(\atom_5)=\Smc(\atom_1)$. We may try to analyse the possible transitions from $\atom_5$ as we did before. However,
notice that $\atom_5$ contains the two formulas $\bigcirc(a U b)$ and $\bigcirc(\neg a\land \neg b)$. This means that for every
$S\in\Smc(\atom_5)$ and every atom \atom appearing in a tuple from $T_S(\atom_5)$ it must hold that 
$\{a U b, \neg a\land \neg b\}\subseteq\atom$, but this contradicts the definition of an atom. Thus, 
$\bigcup_{S\in\Smc(\atom_5)}T_S(\atom_5)=\emptyset$. Note that this is not specific of $\atom_5$, but in fact any atom $\atom$ such
that $(\_,\_,\atom)\in T_{S_0}(\atom_1)$ behaves like this. To see why, note that any such atom must contain the formulas
$a U b$ and $\bigcirc(\neg a\land\neg b)$ arising from the fact that $\atom$ is the successor representing the set $Q_{11}$ and
witnesses the satisfiability of both probabilistic constraints. Moreover, \atom being a successor of $\atom_1$ means that 
$\neg b\in\atom$. Thus, by the properties of atoms, $\bigcirc\neg b$ must belong to \atom as well, and the previous argument
developed for $\atom_5$ applies again here.

In a similar note, observe that $\bigcup_{S\in\Smc(\atom_3)}T_S(\atom_3)=\emptyset$. The reason for this is that $\atom_3$
contains $\bigcirc(a U b)$, which means that every successor node must satisfy $a U b$, which implies that we cannot use
any tuple containing $S_{00}$ or $S_{10}$, as those must violate this formula. But as we have seen already no subset of
$S_2=\{Q_{01},Q_{11}\}$ belongs to $\Smc(\atom_3)$. Finally, $\atom_4$ does not have any successors either. Indeed, every
successor of $\atom_4$ must contain $\neg a\land\neg b$, but in every transition there must be an element which satisfies $b$.

From all this information, we can construct the automaton $\Amc_\psi$. A part of this automaton is depicted in Figure~\ref{fig:autexa}.
\begin{figure}[t!]
\centering
\includegraphics{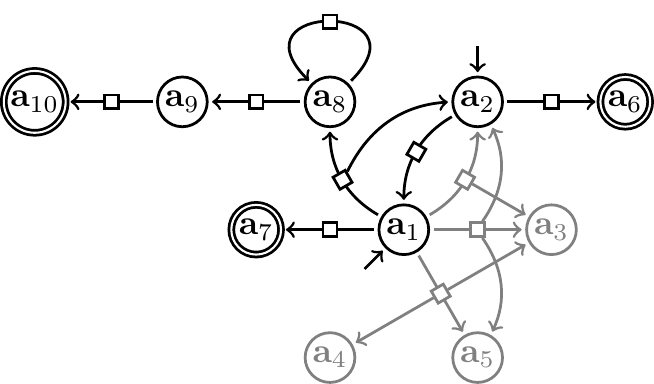}
\caption{A part of the automaton $\Amc_\psi$ with boxes representing hyperedges. $\atom_1$ and $\atom_2$ are initial states.
Double lines represent final states. Gray nodes and edges are those that are removed from the reduced automaton.}
\label{fig:autexa}
\end{figure}
As it can be seen, the atom $\atom_5$ is in fact a bad state, and hence it, along with all transitions leading to it, is removed from
the reduced automaton $\dddot{\Amc_\psi}$. On the other hand, the language accepted by this automaton is not empty, and 
hence $\psi$ is satisfiable. Moreover,
we can follow transitions from an initial state to final states to construct an accepted tree, which will serve as a model of the 
formula $\psi$. 
For example, Figure~\ref{fig:appmodel} shows an abstract template of some models obtained this way.
For any values such that $p_1+p_2=q_1+q_2=1$, and satisfies $p_1,q_1\le 0.6; p_2,q_2\le 0.7$, the figure represents a model. 
\begin{figure}
\centering
\includegraphics{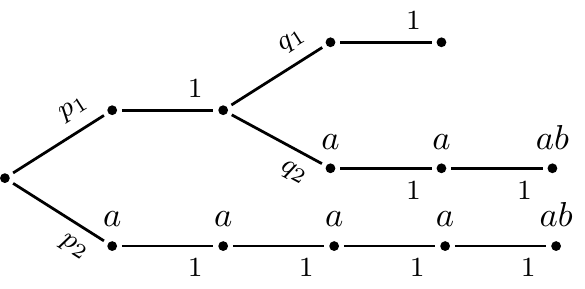}
\caption{Some models of $\psi$.}
\label{fig:appmodel}
\end{figure}
This model is obtained starting from the initial
state $\atom_1$ (which, as a valuation, makes $a$ and $b$ false), and is allowed to make the transition $(\atom_1,\atom_8,\atom_2)$.
The upper branch in the model represents the execution from $\atom_2$, which makes a transition back to $\atom_1$, and then
reuses the same transition $(\atom_1,\atom_8,\atom_2)$, but this time, $\atom_2$ makes a transition to $\atom_6$, which is a final
state. The lower branches show some iterations remaining in $\atom_8$ before making a transition to $\atom_9$ and finally 
reaching $\atom_10$ which is a final state. In models following this pattern, the optimist view on the probability of observing
$\bigcirc\bigcirc\bigcirc a$ is 0.7, obtained by traversing the lower branch. However, if we know that after one timestep $a$ was not
observed, then the probability becomes $0.6\cdot 0.7=0.42$ as obtained from the middle branch. Notice that this discussion is limited
to this class of models only.

We now transform this automaton into the weighted automaton $\Bmc_\psi$ in order to handle the most likely traces. A portion
of this automaton, corresponding to the fragment depicted in Figure~\ref{fig:autexa} of $\Amc_\psi$ is shown in Figure~\ref{fig:waut}.
\begin{figure}
\centering
\includegraphics{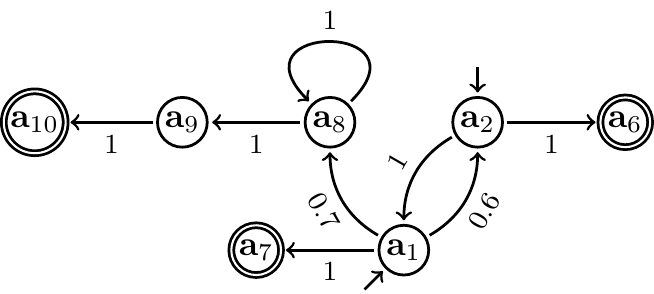}
\caption{$\Bmc_\psi$.}
\label{fig:waut}
\end{figure}
Note that, for example, the transition $(\atom_1,\atom_8,\atom_2)$ from $\Amc_\psi$ gives rise to the transitions $(\atom_1,\atom_8)$
and $(\atom_1,\atom_2)$ with weights $0.7$ and $0.6$ respectively, which are obtained by maximising the potential values of the
systems of inequalities that produce them. However, as mentioned before, these maximisations do not define a probability: there 
is no model that makes both transitions reach this maxima; indeed, in any model of $\Amc_\psi$ the sum of the probabilities of these
transitions needs to always be 1. 

Just by observing this fragment of $\Bmc_\psi$, we can immediately see that $\|\Bmc_\psi\|=1$. Indeed, the runs $\atom_1,\atom_7$
and $\atom_2,\atom_6$ both have weight 1. These runs define traces with a probability 1 of occurring; i.e., $(\emptyset,\{a\})$, and
$(\emptyset,\emptyset)$, respectively. Indeed, notice that both probabilistic constraints in the formula $\psi$ give only an upper
bound on the probability of observing some behaviour. Hence, there are models that assign a probability 0 to both of them (i.e., a
probability 1 of none occurring) as witnessed by these traces. For a prefix $(\emptyset, \{a\}, \{a\}, \{a,b\})$
one must satisfy the probabilistic constraint $(a U b)$, and so the highest probability possible is $0.7$, witnessed by the run
$\atom_1,\atom_8,\atom_9,\atom_{10}$ (with the mlt being exactly that prefix).

To conclude this example, we depict the flattened automaton obtained from  $\Bmc_\psi$ in Figure~\ref{fig:faut}.
\begin{figure}
\centering
\includegraphics{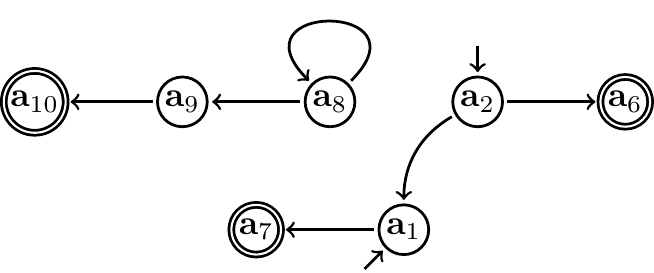}
\caption{$\overline{\Bmc_\psi}$.}
\label{fig:faut}
\end{figure}

\end{document}